\newtheorem{theorem}{Theorem}
\newtheorem{problem}[theorem]{Problem}
\newtheorem{definition}[theorem]{Definition}
\newtheorem{lemma}[theorem]{Lemma}
\newtheorem{corollary}[theorem]{Corollary}
\newtheorem{algorithm}[theorem]{Algorithm}
\DeclareMathOperator{\Tr}{Tr}
\DeclareMathOperator{\im}{Im}
\begin{document}

\title{An Improved Approximation Algorithm for Quantum Max-Cut on Triangle-Free Graphs}
\author{Robbie King}
\affiliation{Department of Computing and Mathematical Sciences, Caltech}
\orcid{0000-0002-8152-6340}
\maketitle

\begin{abstract}
    We give an approximation algorithm for Quantum Max-Cut which works by rounding a semi-definite program (SDP) relaxation to an entangled quantum state. The SDP is used to choose the parameters of a variational quantum circuit. The entangled state is then represented as the quantum circuit applied to a product state. It achieves an approximation ratio of $0.582$ on triangle-free graphs. The previous best algorithms of Anshu, Gosset, Morenz \cite{AGM}, and Parekh, Thompson \cite{Parekh-Thompson} achieved approximation ratios of $0.531$ and $0.533$ respectively. In addition we study the EPR Hamiltonian, whose terms project onto EPR states rather than singlet states. (EPR are initials Einstein, Podolsky and Rosen.) We argue this is a natural intermediate problem which isolates some key quantum features of local Hamiltonian problems. For the EPR Hamiltonian, we give an approximation algorithm with approximation ratio $1 / \sqrt{2}$ on all graphs.
\end{abstract}


\section{Introduction}

Constraint satisfaction is the canonical NP-complete problem. Since it is impossible to solve these exactly in polynomial time (assuming $\text{P} \neq \text{NP}$), approximation algorithms have been developed. These algorithms often work by relaxing the problem to an efficiently solvable semi-definite program (SDP). They then `round' the SDP solution back to a legal configuration which satisfies as many constraints as possible. The seminal example of such an algorithm is the Goemans-Williamson `hyperplane rounding' algorithm for Max-Cut \cite{GW}, which achieves an approximation ratio of 0.878. The PCP theorem tells us that there is some constant factor of the optimum within which it is NP-hard to find a solution. Stronger computational assumptions such as the Unique Games Conjecture can even be used to show optimality of SDP rounding approaches \cite{GW_UGC_hardness}.

The quantum analogue of constraint satisfaction is the local Hamiltonian problem. Here, the goal is to find the optimum eigenvalue of a given local Hamiltonian. This is the canonical QMA-complete problem, and one can similarly ask for approximation algorithms \cite{QMA_approximation}. In many contexts, `optimum' means the minimum eigenvalue, though when studying approximation algorithms it is more natural to search for the maximum eigenvalue. We will focus on Hamiltonians expressed as a sum of positive semi-definite terms. Then the figure of merit is the energy $E$ of the quantum state output by the algorithm. The approximation ratio is defined as $E/||H||$, where $||H||$ is the spectral norm of the Hamiltonian. A quantum PCP Hamiltonian, conjectured to exist, would have the feature that there is some constant ratio within which it is QMA-hard to approximate $||H||$. Thus searching for better approximation algorithms can be viewed as complementary to searching for quantum PCP Hamiltonians.

The local Hamiltonian problem is of high importance in condensed matter physics and quantum chemistry. In these contexts it corresponds to finding the ground energy of a quantum physical system.

An aspect unique to the quantum scenario is the presence of entanglement in quantum ground states. Not only is the optimal energy QMA-hard to compute, but the optimal state may even be impossible to describe succinctly. Most classical algorithms are limited to search over a small subset of quantum states, given by a choice of ansatz. The natural ansatz is the set of product states, which have no entanglement. In physics this corresponds to mean-field theory, and in quantum chemistry this is related to the Hartree-Fock method. There are many situations in which product states can achieve close to optimal energy \cite{Brandao-Harrow}. However, for a QMA-complete problem, entanglement is necessary. Assuming $\text{NP}\neq\text{QMA}$ there must be instances where product states cannot be inverse polynomially close to optimal, and additionally assuming quantum PCP there must be instances where product states cannot be O(1)-close to optimal.

Quantum Hamiltonian problems also have SDP relaxations. Work has been done on how to round SDP solutions to product states \cite{Brandao-Harrow,BGKT,Gharibian-Parekh,Parekh-Thompson-2-local,Parekh-Thompson-product}. However, we can hope to do better by rounding the SDP to an entangled state in general, as in \cite{AGM,Parekh-Thompson,AGMS}. In this work we try to make progress on this difficult task. Our algorithm can be viewed as a rounding of the SDP to an entangled ansatz.

Our work can also be viewed through the lens of variational quantum algorithms. The ansatz we use is a variational quantum circuit applied to an initial product state. The circuit consists of commuting entangling gates, following \cite{AGM,AGMS}. Using the SDP relaxation, we are able to choose the circuit parameters efficiently and classically. To our knowledge, this is the first time an SDP has been used to optimize a variational quantum circuit.

The high level idea is as follows. Given a Quantum Max-Cut Hamiltonian, it is known how to round an SDP relaxation to a product state which achieves good energy \cite{Gharibian-Parekh}. We would then like to improve the energy with a quantum circuit by entangling the qubits in the support of each local Hamiltonian term. The issue that this runs into is \emph{monogamy of entanglement} \cite{AGM}. That is, we cannot simultaneously entangle the qubits of overlapping terms. Here, we use the SDP again. If the relaxation is sufficiently tight, it will `know about' the constraint of monogamy of entanglement \cite{Parekh-Thompson}. It will then be able to tell us which terms to prioritize when we look to improve the energy of the product state with an entangling circuit.

Monogamy of entanglement is a crucial and elusive extra constraint which is not present in classical constraint satisfaction. The Quantum Max-Cut Hamiltonian is a sum of terms which project pairs of qubits onto the singlet state $\frac{1}{\sqrt{2}} (|01\rangle - |10\rangle)$. It seems to possess the combinatorial difficulty of classical Max-Cut, in addition to the entanglement difficulty. Motivated by this, we define a new Hamiltonian problem which instead consists of terms which project onto the EPR state $\frac{1}{\sqrt{2}} (|00\rangle + |11\rangle)$. Optimisation of this Hamiltonian over product states is trivial, since the optimal product state is simply $|0 \dots 0\rangle$. Thus all the difficulty in this problem arises from `distributing' entanglement in the optimal way.

\section{Results}

We study a particular family of quantum Hamiltonians known as Quantum Max-Cut (QMC). This is defined on a weighted graph $(V, \{w_{ij}\})$ with non-negative weights $w_{ij}\geq0$. Each vertex is a qubit, and to each edge is associated the Hamiltonian term $h = 2 \psi^- = 2 |\psi^-\rangle\langle\psi^-|$ where $|\psi^-\rangle = \frac{1}{\sqrt{2}} \left(|01\rangle - |10\rangle\right)$ is the singlet state. In the Pauli basis $h = \frac{1}{2} \left(\mathbbm{1} - X \otimes X - Y \otimes Y - Z \otimes Z \right)$. The total Hamiltonian is the weighted sum $H = \sum_{ij} w_{ij} h_{ij}$. It has the nice property that it is rotation-invariant; that is, for any single-qubit unitary $U$, we have $(U^\dag \otimes U^\dag) h (U \otimes U) = h$. We have $H \succeq 0$ since the weights are all non-negative. (The symbol $\succeq$ denotes that a matrix is positive semi-definite.) Let's normalise so that we have total weight $\sum_{ij} w_{ij} = 1$.

The maximum-energy state of the QMC Hamiltonian is the same as the groundstate of the anti-ferromagnetic Heisenberg model. This is a Hamiltonian which has been well-studied in physics for decades, serving as a model for magnetism. However, anti-ferromagnetic Heisenberg is usually written as a traceless Hamiltonian, whilst QMC is shifted so that it is positive semi-definite. This means approximation ratios for the two may differ.

\begin{problem}\label{problem_QMC}
\emph{(QMC)}
Given a weighted graph $(V, \{w_{ij}\})$, compute the maximum eigenvalue $||H||$ of the Hamiltonian defined above.
\end{problem}

We remark that if we had instead used the diagonal Hamiltonian term $h = \frac{1}{2} \left(\mathbbm{1} - Z \otimes Z \right)$, this would correspond to the original classical Max-Cut problem.

We also study another family of Hamiltonians, which we will call the EPR Hamiltonian (EPR). Also defined on a weighted graph $(V, \{w_{ij}\})$ with non-negative weights $w_{ij}\geq0$, to each edge is associated the Hamiltonian term $g = 2 \phi^+ = 2 |\phi^+\rangle\langle\phi^+|$ where $|\phi^+\rangle = \frac{1}{\sqrt{2}} \left(|00\rangle + |11\rangle\right)$ is the EPR pair. In the Pauli basis $g = \frac{1}{2} \left(\mathbbm{1} + X \otimes X - Y \otimes Y + Z \otimes Z \right)$. The total Hamiltonian is the weighted sum $G = \sum_{ij} w_{ij} g_{ij}$. $G \succeq 0$ since the weights are all non-negative.

\begin{problem}\label{problem_EPR}
\emph{(EPR)}
Given a weighted graph $(V, \{w_{ij}\})$, compute the maximum eigenvalue $||G||$ of the Hamiltonian defined above.
\end{problem}

QMC is known to be QMA-hard \cite{QMA_complete,QMA_complete_2}. However, there is no known hardness result if the graph is forced to be triangle-free.

There is no hardness result for EPR. Switching momentarily back to energy minimization, the Hamiltonian $-G$ is \emph{stoquastic}, in other words sign-problem-free. Thus it belongs to the complexity class StoqMA, which is believed to be strictly smaller than QMA \cite{QMA_complete,QMA_complete_2}. It is possible that EPR is indeed solvable in polynomial time.

It is useful to note that, for a bipartite graph $V = V_0 \sqcup V_1$, these two problems are equivalent. We can transform the EPR Hamiltonian into the QMC Hamiltonian by rotating the qubits in $V_0$ by $Y$. That is, we transform
\begin{equation} \label{equivalence}
(Y \otimes \mathbbm{1}) g (Y \otimes \mathbbm{1}) = h
\end{equation}
This means that any algorithm for EPR automatically gives an algorithm for QMC on bipartite graphs.

Any algorithm for either problem which searches only over product states has an approximation ratio bounded above by $1/2$. To see this, consider the instance of a single isolated edge $H = h_{12}$. The maximum energy of a product state is $E=1$, yet the entangled singlet state gets energy $E=2$. A similar argument applies to the EPR Hamiltonian.

For a state $|\psi\rangle$, we denote $\langle H\rangle = \langle\psi|H|\psi\rangle$.

\begin{theorem}
There is an efficient classical approximation algorithm for QMC which returns a value $E$ in the range $0.582 \cdot ||H|| \leq E \leq ||H||$ on triangle-free interaction graphs. It outputs a classical description of a state achieving $\langle H \rangle = E$.
\end{theorem}

We suspect that our algorithm achieves the same approximation ratio on all graphs, but our proof only covers the triangle-free case.

\begin{theorem}
There is an efficient classical approximation algorithm for EPR which returns a value $E$ in the range $\frac{1}{\sqrt{2}} ||H|| \leq E \leq ||H||$ on any interaction graphs. It outputs a classical description of a state achieving $\langle H \rangle = E$.
\end{theorem}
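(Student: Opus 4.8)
The plan is to treat this, like the QMC algorithm, as rounding an SDP relaxation to an entangled ansatz, while cashing in the special structure of EPR: the product state $|0\dots0\rangle$ achieves $\langle g_{ij}\rangle = 1 = \tfrac12\|g_{ij}\|$ on \emph{every} edge simultaneously, so an instance can only beat the trivial ratio $\tfrac12$ by being close to satisfiable, and precisely in that regime the rounding must do something genuinely entangling. Concretely, I would first solve a semidefinite relaxation of $\|G\|$ — a degree-$2$ pseudomoment program over the operators $\{\mathbbm 1,X_i,Y_i,Z_i\}$ with all one- and two-qubit reduced ``states'' constrained to be positive semidefinite (the Gharibian--Parekh-style SDP \cite{Gharibian-Parekh}). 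It is efficiently solvable and its value $\mathrm{SDP}$ satisfies $\mathrm{SDP}\ge\|G\|$; crucially, the two-qubit PSD constraints encode monogamy of entanglement, so the SDP cannot simultaneously certify $\langle g_{ij}\rangle = 2$ on overlapping edges, and it tells us which edges to prioritise.

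For the rounding I would extract from the SDP solution a fractional matching $\{z_{ij}\ge 0\}$ with $\sum_j z_{ij}\le 1$ at every vertex, weighting each edge by how much the relaxation gains on it over the product-state value, and then round it to an honest matching $M$ (randomised rounding over the matching polytope, or a best-choice / conditional-expectations argument). The output state places on each $\{i,j\}\in M$ the two-qubit state $\cos\theta\,|00\rangle+\sin\theta\,|11\rangle$ for a suitable angle $\theta$, and $|0\rangle$ on every unmatched qubit; this is exactly one layer of commuting two-qubit gates applied to $|0\dots0\rangle$, as in the variational framework of \cite{AGM,AGMS}. The energy reads off edge by edge: $\langle g_{ij}\rangle = 1+\sin2\theta$ if $\{i,j\}\in M$, $\langle g_{ij}\rangle = \cos^2\theta$ if exactly one endpoint is matched (elsewhere), $\langle g_{ij}\rangle = \cos^4\theta+\sin^4\theta$ if both endpoints are matched elsewhere, and $\langle g_{ij}\rangle = 1$ if neither is matched. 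Taking the expectation over the rounding makes $\langle G\rangle$ an explicit function of the $z_{ij}$'s and $\theta$: a first-order gain $\sin2\theta$ on matched edges against an at-most second-order loss on adjacent edges.

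Finally I would choose $\theta$ to optimise the worst case, use the monogamy (two-qubit PSD) constraints to bound the total adjacent-edge loss against the matched-edge gain, and conclude $\mathbb E[\langle G\rangle]\ge\tfrac{1}{\sqrt2}\,\mathrm{SDP}\ge\tfrac{1}{\sqrt2}\|G\|$, then derandomise; as a sanity check, when the SDP certifies little improvement over $|0\dots0\rangle$ one keeps $\theta$ near $0$ and falls back on the product state, which is already within $\tfrac1{\sqrt2}$ whenever $\|G\|\le\sqrt2$. The same algorithm then yields QMC on bipartite graphs via the transformation \eqref{equivalence}. I expect the main obstacle to be calibrating the SDP against the rounding: one needs a relaxation tight enough that its monogamy constraints quantitatively dominate the adjacent-edge losses the matching ansatz incurs — equivalently, that the SDP's notion of which edges to prioritise is one the matching ansatz can actually cash in — and then verifying that the resulting worst-case ratio is exactly $1/\sqrt2$ rather than slightly worse; making the angle $\theta$ and the fractional-matching extraction line up for this is where the real work lies.
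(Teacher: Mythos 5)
Your proposal has the right high-level shape (SDP upper bound, entangle-or-fall-back-to-$|0\cdots0\rangle$, monogamy as the binding constraint, the bipartite reduction to QMC), but two of its load-bearing steps are not established, and both are points where the paper does something different precisely to avoid the gap. First, the SDP: you propose a degree-2 pseudomoment program over $\{\mathbbm 1,X_i,Y_i,Z_i\}$ with PSD two-qubit reduced pseudo-states, i.e.\ essentially $\text{Lasserre}_1$ plus local consistency. The quantitative monogamy inequality the whole argument hinges on --- $\sum_{j\in S} y_{ij}\le 1$ for every star (Lemma \ref{starbound_lemma}) --- is only known for the \emph{level-2} Lasserre hierarchy (it is Theorem 11 of \cite{Parekh-Thompson}, transported to EPR via Equation \ref{equivalence}); it is not known to hold, and should not be assumed, for the level-1 relaxation with local PSD constraints. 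Asserting that "the two-qubit PSD constraints encode monogamy" is exactly the calibration step you flag as the real work, and it is the step that fails at level 1.

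Second, the rounding: you round to a matching and place $\cos\theta|00\rangle+\sin\theta|11\rangle$ on matched edges. The paper deliberately avoids matchings: it applies $\exp\bigl(\tfrac12 i\theta_{ij}(X_i-Y_i)(X_j-Y_j)\bigr)$ to \emph{every} edge with an edge-dependent angle $\sin 2\theta_{ij}=\max(y_{ij},0)$, computes the exact energy via a product-of-cosines formula (Lemma \ref{rotation_lemma_EPR}), and uses the star bound to get the per-edge lower bound $\langle g_{ij}\rangle\ge \tfrac12+y_{ij}+\tfrac12 y_{ij}^2$, whence Jensen gives $\langle G\rangle\ge \tfrac12+\eta+\tfrac12\eta^2$ with $\eta=\sum_{ij}w_{ij}y_{ij}$, against the SDP value $1+\eta$; the min over $\eta$ of $\max\bigl(\tfrac{1}{1+\eta},\tfrac{1/2+\eta+\eta^2/2}{1+\eta}\bigr)$ is exactly $1/\sqrt2$, attained at $\eta=\sqrt2-1$ where both the rotated state and the product state have energy $1$ while the SDP certifies $\sqrt2$. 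Your per-edge energies for the matching ansatz are correct, but you never derive a ratio from them; whether a matching-based rounding can be pushed to $1/\sqrt2$ against an SDP is precisely what you leave open, and the prior matching-based analyses \cite{AGM,Parekh-Thompson} suggest it is not automatic. So as written the proposal is an algorithm sketch with a plausible but unproven guarantee, not a proof of the theorem.
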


The algorithm first relaxes the problem to an efficiently solvable SDP. The SDP is the level-2 Lasserre hierarchy from \cite{Parekh-Thompson}. It then finds an initial good product state, using the rounding algorithm in \cite{Gharibian-Parekh}. In the EPR setting, we can simply take the all-zero state $\bigotimes_V |0\rangle$ as the initial product state. Finally, the algorithm uses the SDP solution to design a quantum circuit of commuting entangling gates. To each edge $(i,j)$, a gate of the form $\exp(i \theta_{ij} P_i \otimes P_j)$ is applied, where $P_i, P_j$ are single-qubit matrices, inspired by \cite{AGM,AGMS}. These gates are designed to rotate the two qubits $(i,j)$ towards the singlet state (or the EPR pair). The output is the entangled state resulting from applying this circuit to the initial product state.

Sections \ref{Lasserre}, \ref{rotation}, \ref{product} elaborate on the three ingredients respectively. Section \ref{algorithm_EPR} discusses a rounding algorithm for EPR, before Section \ref{algorithm_QMC} extends the ideas to construct a rounding algorithm for QMC.

\bigskip
{\noindent \bf \large Notation}

\begin{itemize}
    \item The Pauli matrices are denoted $X = \begin{pmatrix}0&1\\1&0\end{pmatrix}$, $Y = \begin{pmatrix}0&-i\\i&0\end{pmatrix}$, $Z = \begin{pmatrix}1&0\\0&-1\end{pmatrix}$.
    \item If $P$ is a single-qubit operator, $P_i$ denotes $P$ acting on qubit $i$ (for example $X_i$).
    \item Denote the vector of Pauli matrices acting on qubit $i$ by $\vec{\sigma}_i = \begin{pmatrix}X_i\\Y_i\\Z_i\end{pmatrix}$.
    \item When a quantum state $|\psi\rangle$ is clear from the context, we denote $\langle H\rangle = \langle\psi|H|\psi\rangle$.
    \item $\succeq$ denotes that a matrix is positive semi-definite.
\end{itemize}

\section{Prior work}

The first works with SDP rounding algorithms for quantum Hamiltonian problems are \cite{Brandao-Harrow,BGKT}. Brandao and Harrow consider graphs with low threshold rank, and Bravyi et al.\ consider the setting of traceless 2-local Hamiltonians. The outputs of both of these algorithms are product states, or products of few-qubit states.

Gharibian and Parekh develop an algorithm rounding an SDP to a product state in the setting of Quantum Max-Cut \cite{Gharibian-Parekh}. Their state achieves an approximation ratio of $0.498$. In addition \cite{QMC_UGC_hardness} show that this is the optimal approximation algorithm based on the level-1 SDP, assuming a plausible conjecture in Gaussian geometry. This is close to optimal, since product states have an approximation ratio bounded above by $1/2$. \cite{AGM,Parekh-Thompson} are the first to go beyond product state approximation algorithms for Quantum Max-Cut. Anshu, Gosset and Morenz achieve an approximation ratio of $0.531$. Parekh and Thompson introduce the level-2 Lasserre hierarchy in the context of Quantum Max-Cut. They give a rounding algorithm which gets an improved approximation ratio of $0.533$. In \cite{Parekh-Thompson-product}, Parekh and Thompson go back and use the level-2 Lasserre hierarchy to achieve a product state approximation with an exactly optimal approximation ratio of $1/2$.

Another line of work relevant to us is that of \cite{AGM,AGMS}. They study 2-local Hamiltonian problems on low-degree graphs. They show that the energy of any product state can be improved by applying a particular quantum circuit which entangles qubits on the edges of the graph. The circuit we use is heavily inspired by theirs.

As well as improve on the approximation ratio of previous algorithms for Quantum Max-Cut, we make progress in two directions simultaneously.

Firstly, a common feature of the algorithms in \cite{AGM,Parekh-Thompson} is the nature of the entangled states they produce. Namely, their states are a combination of singlets placed on a graph matching, and product states. This is slightly dissatisfying, and we could hope that we could do better with an algorithm outputting a state with more global entanglement. Inspired by \cite{AGM,AGMS}, the algorithm we present rounds the SDP relaxation to a globally entangled state.

Secondly, the construction in \cite{AGM,AGMS} only works on low degree interaction graphs. This is because they rotate each edge by the same amount, and rotating one edge will frustrate adjacent edges. Here we promote their circuit to a variational construction, with a degree of freedom on each edge of the graph. In other words, we allow each edge to be rotated by a different amount. This provides a generalisation of their technique to graphs of arbitrary degree.

An independent work \cite{scoop} by Eunou Lee appearing simultaneously achieves similar results with similar techniques. Here we briefly compare the two works. They achieve an approximation ratio of 0.562 for QMC on \emph{all} graphs. We achieve a slightly better approximation ratio of 0.582, but our analysis only applies to triangle-free graphs. They use the same SDP relaxation as us, namely level-2 of the Lasserre hierarchy; and like us, they round it to the output of a commuting quantum circuit applied to a product state. It also appears that their analysis hinges on the same fact about level-2 Lasserre obeying monogamy of entanglement. This is our Lemma \ref{starbound_lemma}, and Lemma 4 in \cite{scoop}, both originating from Theorem 11 in \cite{Parekh-Thompson}. The main differences lie in the details of the product state and commuting circuit. The initial product state they use is a computational basis state in the Z-basis, whilst our initial product state consists of general vectors on the Bloch sphere. Since the QMC problem is SU(2)-symmetric, we worked hard to ensure our algorithm respects this symmetry. This difference is perhaps responsible for the slightly improved approximation ratio of our work, and also perhaps the difficulty we encountered when analysing graphs with triangles. The formula used to define the circuit rotation angles $\theta_{ij}$ as a function of the SDP differs slightly between the two works. We use a simple $\arcsin$ function, whilst their formula involves the exponential function; see Algorithm 1 in \cite{scoop}.

\section{Level-2 Lasserre hierarchy}\label{Lasserre}

In this section, we introduce the Lasserre hierarchy SDP. We will go through the construction of the SDP in detail. However, the only facts which will be used by the algorithms in later sections are Theorem \ref{upper_bound_thm} and Corollary \ref{starbound_corollary}.

Let $\mathcal{H}_n = (\mathbb{C}^2)^{\otimes n}$ be the Hilbert space of $n$ qubits, and $\mathcal{L}(\mathcal{H}_n)$ the space of operators on $\mathcal{H}_n$. A quantum state is in general given by a density matrix $\rho \in \mathcal{L}(\mathcal{H}_n)$ satisfying
\begin{itemize}
    \item $\rho \succeq 0$
    \item $\Tr{\rho} = 1$
\end{itemize}
From $\rho$ we can derive a bilinear form $M: \mathcal{L}(\mathcal{H}_n) \times \mathcal{L}(\mathcal{H}_n) \rightarrow \mathbb{C}$ given by $M(\alpha,\beta) = \Tr(\rho\alpha^\dag\beta)$. $M$ must satisfy
\begin{itemize}
    \item $M \succeq 0$
    \item $M(\alpha,\beta) = M(\alpha',\beta') \ \forall \ \alpha^\dag \beta = \alpha'^\dag \beta'$
    \item $M(\mathbbm{1},\mathbbm{1}) = 1$
\end{itemize}
In fact, this is an equivalence, in the sense that any bilinear form $M$ satisfying these constraints uniquely specifies a density matrix $\rho$.

Let $H \in \mathcal{H}_n$ be our local Hamiltonian. Now optimizing $\Tr(\rho H)$ over density matrices $\rho$ is equivalent to optimizing $M(\mathbbm{1},H)$ over these positive semi-definite bilinear forms $M$. This already looks like an SDP. The issue is that the dimension of the space $\mathcal{L}(\mathcal{H}_n)$ is exponential in $n$. To deal with this, we \emph{relax} the SDP.

The $d$-local operators span a subspace of operators, call it $\mathcal{L}^{(d)}(\mathcal{H}_n) \subseteq \mathcal{L}(\mathcal{H}_n)$. Note $\mathcal{L}^{(d)}(\mathcal{H}_n) \subseteq \mathcal{L}^{(d')}(\mathcal{H}_n)$ for $d \leq d'$. These subspaces have two crucial properties. Firstly, they contain the 2-local Hamiltonian $H \in \mathcal{L}^{(2)}(\mathcal{H}_n)$. Secondly, the dimension of $\mathcal{L}^{(d)}(\mathcal{H}_n)$ is polynomial in $n$ (albeit exponential in $d$). Thus we can instead optimize $\widetilde{M}(\mathbbm{1},H)$ over the submatrix $\widetilde{M} : \mathcal{L}^{(d)}(\mathcal{H}_n) \times \mathcal{L}^{(d)}(\mathcal{H}_n) \rightarrow \mathbb{C}$ for some fixed locality $d$. We still require this submatrix to have the properties
\begin{itemize}
    \item $\widetilde{M} \succeq 0$
    \item $\widetilde{M}(\alpha,\beta)$ depends only on $\alpha^\dag \beta \in \mathcal{L}(\mathcal{H})$
    \item $\widetilde{M}(\mathbbm{1},\mathbbm{1}) = 1$
\end{itemize}
This will be an efficiently solvable SDP. Notice that $\mathcal{L}^{(d)}(\mathcal{H}_n)$ is \emph{not} closed under multiplication of operators.

This is a relaxation in the sense that any solution to the true problem will give a solution to the relaxation with the same energy. Given a density matrix $\rho$, $\widetilde{M}(\alpha,\beta) := \Tr(\rho\alpha^\dag\beta)$ will be feasible for the relaxed SDP, and we will have $\widetilde{M}(\mathbbm{1},H) = \Tr(\rho H)$. However, given a feasible $\widetilde{M}$, there may not exist a density matrix $\rho$ such that $\widetilde{M}(\alpha,\beta) = \Tr(\rho\alpha^\dag\beta)$. The relaxation can be interpreted as optimizing the Hamiltonian over pseudo-states which are only required to be positive on squares of $d$-local observables, rather than positive on \emph{all} positive semi-definite observables. From here on, we drop the tilde on $\widetilde{M}$.

To write down a concrete SDP relaxation of QMC and EPR, we choose a basis for $\mathcal{L}^{(d)}(\mathcal{H}_n)$. Let $\mathcal{P}^{(d)}_n$ be the set of Pauli operators of weight at most $d$. That is
\begin{equation*}
\mathcal{P}^{(d)}_n = \{\mathbbm{1}, X_1, Y_1, \dots, Y_n, Z_n, X_1 X_2, X_1 Y_2, \dots, (Z_{n-d+1} \dots Z_n)\}
\end{equation*}
These span $\mathcal{L}^{(d)}(\mathcal{H}_n)$ and are orthonormal in the Hilbert-Schmidt inner product.

\begin{definition}\label{SDP}
We are given a weighted graph $(V, \{w_{ij}\})$ with $|V|=n$ vertices. The SDP variable is a matrix $M \in \mathbb{R}^{\mathcal{P}^{(d)}_n \times \mathcal{P}^{(d)}_n}$ with rows and columns indexed by $\mathcal{P}^{(d)}_n$. Define the following SDPs:
\begin{align}
\text{\emph{Lasserre}}_d(H) &= \text{\emph{max}} \ \frac{1}{2} \sum_{ij} w_{ij} \left(1 - M(X_i,X_j) - M(Y_i,Y_j) - M(Z_i,Z_j)\right) \label{SDP_objective_QMC} \\
\text{\emph{Lasserre}}_d(G) &= \text{\emph{max}} \ \frac{1}{2} \sum_{ij} w_{ij} \left(1 + M(X_i,X_j) - M(Y_i,Y_j) + M(Z_i,Z_j)\right) \label{SDP_objective_EPR}
\end{align}
\begin{align}
\text{\emph{s.t.}} \ & M \succeq 0 \nonumber \\
& M(A,B) = M(A',B') \ & \forall \ A^\dag B = A'^\dag B' \nonumber\\
& M(A,B) = -M(A',B') \ & \forall \ A^\dag B = -A'^\dag B' \nonumber \\
& M(A,B) = 0 \ & \forall \ A^\dag B \ \text{\emph{\emph{not} Hermitian}} \nonumber \\
& M(A,A) = 1 \ & \forall \ A \in \mathcal{P}^{(d)}_n \nonumber
\end{align}
\end{definition}

\begin{theorem} \label{upper_bound_thm}
For any constant $d$, $\text{\emph{Lasserre}}_d(H)$ is an efficiently computable semidefinite program which upper bounds \emph{QMC}, and likewise $\text{\emph{Lasserre}}_d(G)$ for \emph{EPR}.
\end{theorem}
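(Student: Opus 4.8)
The statement has two halves: (i) $\text{Lasserre}_d(H)$ and $\text{Lasserre}_d(G)$ are semidefinite programs of size polynomial in $n$, hence computable to any prescribed accuracy in polynomial time; and (ii) their optimal values are at least $\|H\|$ and $\|G\|$ respectively. For (i) I would first note that the variable $M$ lives in $\mathbb{R}^{\mathcal{P}^{(d)}_n\times\mathcal{P}^{(d)}_n}$ with $|\mathcal{P}^{(d)}_n|=\sum_{k=0}^{d}\binom{n}{k}3^{k}=O(n^{d})$, polynomial for fixed $d$; the objective \eqref{SDP_objective_QMC} (resp. \eqref{SDP_objective_EPR}) is linear in $M$, and the constraints consist of one positive-semidefiniteness constraint together with $\mathrm{poly}(n)$ many linear equalities (the consistency relations \eqref{algebraic_constraint}, the sign-flip and vanishing relations, and $M(A,A)=1$). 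So it is an SDP of polynomial size. To invoke standard polynomial-time solvability I would verify the mild regularity conditions: the feasible set is bounded, since $M\succeq 0$ together with $M(A,A)=1$ forces $|M(A,B)|\le 1$; and it contains a strictly feasible point, namely $M=\mathbbm{1}$ (the identity matrix, corresponding to the maximally mixed state), which is positive definite and satisfies all the linear constraints since $A^\dagger B=\mathbbm{1}$ iff $A=B$. This half is essentially bookkeeping.

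For (ii) the plan is to show that every density matrix $\rho$ on $\mathcal{H}_n$ gives rise to a feasible $M$ with objective value exactly $\Tr(\rho H)$ (resp. $\Tr(\rho G)$); applying this to an eigenvector of the maximum eigenvalue then yields the bound. The naive choice $M(A,B)=\Tr(\rho A^\dagger B)$ almost works but fails the realness requirement: for Pauli strings $A,B$ the operator $A^\dagger B$ can equal $\pm i$ times a Hermitian Pauli, making that entry imaginary, whereas the SDP demands a real matrix vanishing there. The fix is $M(A,B):=\operatorname{Re}\Tr(\rho A^\dagger B)$. Then $M$ is real by construction; it depends only on $A^\dagger B$ and flips sign with it; when $A^\dagger B=cP$ with $c\in\{\pm i\}$ and $P$ Hermitian we get $\Tr(\rho A^\dagger B)=c\,\Tr(\rho P)\in i\mathbb{R}$, so its real part is $0$; and $M(A,A)=\operatorname{Re}\Tr(\rho\mathbbm{1})=1$. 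For positive semidefiniteness, given a real vector $x=(x_A)$ set $T=\sum_A x_A A$, which is Hermitian; then $x^\top M x=\operatorname{Re}\Tr(\rho\,T^{2})=\Tr(\rho\,T^{2})\ge 0$ since $T^{2}\succeq 0$ and $\rho\succeq 0$. Thus $M$ is feasible, and since $H$ is a real combination of the Hermitian Paulis $\mathbbm{1},X_iX_j,Y_iY_j,Z_iZ_j$, each of $\Tr(\rho X_iX_j)$ etc. is real and equals its own real part, so the objective at this $M$ is $\tfrac12\sum_{ij}w_{ij}\bigl(1-\Tr(\rho X_iX_j)-\Tr(\rho Y_iY_j)-\Tr(\rho Z_iZ_j)\bigr)=\Tr(\rho H)$. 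The identical computation with $g=\tfrac12(\mathbbm{1}+X\otimes X-Y\otimes Y+Z\otimes Z)$ handles $\text{Lasserre}_d(G)$.

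The one genuinely non-obvious point — and the step I would treat most carefully — is this clash between the realness of $M$ and the $\pm i$ phases carried by products of Pauli operators: it forces the $\operatorname{Re}(\cdot)$ in the construction, and one must check (via the identity $x^\top M x=\Tr(\rho T^{2})$ above) that passing to real parts does not destroy positive semidefiniteness. Everything else — the dimension count, boundedness and the Slater point for efficient solvability, and matching the objective value — is routine. As an alternative to the real-part trick, one can first replace $\rho$ by $\tfrac12(\rho+\bar\rho)$, which preserves the energy because $H$ and $G$ are real matrices, and argue directly that the corresponding entries vanish; but the $\operatorname{Re}(\cdot)$ formulation is cleaner.
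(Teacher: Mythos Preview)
Your proposal is correct and matches the paper's approach: the paper does not give a standalone proof of this theorem but treats it as immediate from the preceding discussion (the map $\rho\mapsto\Tr(\rho\,\alpha^\dagger\beta)$) together with the remark just after it that one may replace $M$ by $\tfrac12(M+M^*)$ --- which is exactly your $\operatorname{Re}\Tr(\rho A^\dagger B)$. Your write-up is simply more explicit than the paper on the points it leaves implicit (the dimension count, boundedness, the Slater point $M=\mathbbm{1}$, and the verification that taking real parts preserves PSD via $x^\top M x=\Tr(\rho T^2)$).
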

\begin{proof}
The statement for $\text{Lasserre}_d(H)$ and QMC is Theorem 7 of \cite{Parekh-Thompson}. The proof for $\text{Lasserre}_d(G)$ and EPR is essentially the same.
\end{proof}

Note that we have taken $M$ to be real symmetric rather than complex Hermitian. This is without loss of generality, since if $M$ is feasible then $\frac{1}{2}(M + M^*)$ is also feasible and achieves the same objective.

We now restrict our attention to the level-2 relaxation $\text{Lasserre}_2$. The task is to use $\text{Lasserre}_2$ to construct a genuine quantum state of high energy. To this end, we extract values for each edge which will be used by the algorithm. For each edge $(i,j)$ define
\begin{align}
1 + x_{ij} &= \frac{1}{2} \left(1 - M(\mathbbm{1},X_iX_j) - M(\mathbbm{1},Y_iY_j) - M(\mathbbm{1},Z_iZ_j)\right) \ , \quad M \text{ solves } \text{Lasserre}_2(H) \label{x_values} \\
1 + y_{ij} &= \frac{1}{2} \left(1 + M(\mathbbm{1},X_iX_j) - M(\mathbbm{1},Y_iY_j) + M(\mathbbm{1},Z_iZ_j)\right) \ , \quad M \text{ solves } \text{Lasserre}_2(G) \label{y_values}
\end{align}

Let's focus on QMC to get some intuition for $x_{ij}$. If $M$ was induced from a valid global density matrix $\rho$, then we would have $1 + x_{ij} = 2\Tr(\psi^-_{ij} \rho) = \Tr(h_{ij} \rho) = \langle h_{ij} \rangle$ the energy of the term $h_{ij}$. Intuitively the values $x_{ij}$ measure how much $\text{Lasserre}_2(H)$ `wants' the edge $(i,j)$ to be close to the singlet state. This is illustrated by the following properties, dropping the subscript $ij$:
\begin{align*}
& -1 \leq x \leq 1 \\
\rho = \psi^- \ &\rightarrow \ x=1 \\
\rho \perp \psi^- \ &\rightarrow \ x=-1 \\
\rho = |01\rangle\langle01| \ &\rightarrow \ x=0
\end{align*}
Similar statements hold for $y_{ij}$ with respect to the EPR state.

The key insight of \cite{Parekh-Thompson} is that $\text{Lasserre}_2$ `knows about' monogamy of entanglement. This can be made precise by the following statement.

\begin{lemma}\label{starbound_lemma}
$\sum_{j \in S} x_{ij} \leq 1$ and $\sum_{j \in S} y_{ij} \leq 1$ for any vertex $i \in V$ and subset $S$ of the neighbours of $i$.
\end{lemma}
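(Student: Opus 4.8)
The plan is to show that the relevant submatrix of $M$ indexed by a cleverly chosen set of Pauli operators is positive semi-definite, and that this PSD condition directly encodes the inequality $\sum_{j\in S} x_{ij}\le 1$. Since $\mathrm{Lasserre}_2$ includes all weight-$\le 2$ Paulis, for a fixed center $i$ and neighbors $j_1,\dots,j_k\in S$ we have access to all operators $P_i P_{j_\ell}$ with $P\in\{X,Y,Z\}$ as well as $\mathbbm 1$ and the single-qubit operators. The natural object to consider is the vector of operators $v = \big(\mathbbm 1,\, \{P_i P_{j_\ell}\}_{P,\ell}\big)$, or rather a rotation-adapted combination of these: recall $1+x_{ij} = \tfrac12(1 - \langle X_iX_j\rangle - \langle Y_iY_j\rangle - \langle Z_iZ_j\rangle)$, so the quantity $x_{ij}$ is essentially $-\tfrac12\sum_P \langle P_iP_j\rangle - \tfrac12$. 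This suggests forming, for each $\ell$, the single operator $S_\ell := \tfrac12(X_iX_{j_\ell} + Y_iY_{j_\ell} + Z_iZ_{j_\ell})$ (the swap-like operator on the pair), so that $M(\mathbbm 1, S_\ell) = -\tfrac12 - x_{ij_\ell}$, i.e. $x_{ij_\ell} = -\tfrac12 - M(\mathbbm 1,S_\ell)$, and then analyze the Gram matrix of $\{\mathbbm 1, S_1,\dots,S_k\}$ under $M$.

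First I would write down the $(k+1)\times(k+1)$ principal submatrix $N$ of (a linear-image of) $M$ with rows/columns indexed by $\mathbbm 1, S_1,\dots,S_k$, using the algebraic constraints (\ref{algebraic_constraint}) to evaluate each entry: the diagonal entries $M(S_\ell,S_\ell)$ are determined by $S_\ell^\dagger S_\ell$, which one computes from the Pauli algebra (each $S_\ell^\dagger S_\ell$ is a fixed 2-local Hermitian operator whose $M$-value is pinned down — this uses that products like $X_iX_{j_\ell}\cdot Y_iY_{j_\ell} = Z_iZ_{j_\ell}$ collapse weight), and crucially the off-diagonal entries $M(S_\ell, S_m)$ for $\ell\ne m$ involve $S_\ell^\dagger S_m$, which is a 3-local operator ($X_{j_\ell}X_{j_m}$-type terms) — but wait, these are products of two weight-2 Paulis sharing only qubit $i$, so $S_\ell^\dagger S_m$ has support on $\{i,j_\ell,j_m\}$, weight 3, and its $M$-value is not directly an SDP variable. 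The resolution, which I expect to be the main technical point, is that $S_\ell^\dagger S_m$ must be expressed as a linear combination whose relevant part lies in $\mathcal L^{(2)}$, or else one chooses the index operators more carefully (e.g. using that $\mathcal P^{(2)}_n$ together with constraint (\ref{algebraic_constraint}) still lets us equate $M(S_\ell, S_m)$ to an $M$-value on a $\le 2$-local operator because the weight-3 pieces, being non-Hermitian or reducible, either vanish or match lower-weight entries). I would carefully verify that after using all the algebraic and Hermiticity constraints the off-diagonal entries become concrete known constants (I anticipate they work out so that the matrix $N$ has a rank-one-plus-scaled-identity structure).

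Once $N\succeq 0$ is established with explicit entries — diagonal equal to some constant $c$, off-diagonal equal to some constant $c'$, and the $\mathbbm 1$-row giving $M(\mathbbm 1,S_\ell) = -\tfrac12 - x_{ij_\ell}$ — the inequality follows by testing $N$ against the all-ones vector (or the vector $(-\lambda,1,1,\dots,1)$ for suitable $\lambda$): positivity of that quadratic form yields a bound of the form $\sum_\ell M(\mathbbm 1,S_\ell)^2 \le (\text{stuff}) \sum_\ell M(S_\ell,S_\ell) + \dots$, or more directly $2\sum_\ell M(\mathbbm 1,S_\ell) \ge -\,(\text{something})$, which rearranges to $\sum_\ell x_{ij_\ell}\le 1$. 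The EPR case $\sum_{j\in S} y_{ij}\le 1$ is handled identically: one simply replaces $S_\ell$ by the operator $\tfrac12(X_iX_{j_\ell} - Y_iY_{j_\ell} + Z_iZ_{j_\ell})$ (equivalently conjugate by $Y$ on the $V_0$ side as in (\ref{equivalence})), and the same PSD argument goes through since the relevant algebra is isomorphic. The hard part, to reiterate, is bookkeeping the Pauli products so that every entry of $N$ is genuinely pinned down by the level-2 constraints; everything after that is a one-line consequence of $N\succeq 0$.
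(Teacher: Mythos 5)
Your overall strategy---testing $M\succeq 0$ against the swap-type operators $S_\ell=\tfrac12\,\vec\sigma_i\cdot\vec\sigma_{j_\ell}$---is genuinely the right kind of argument and is close in spirit to the actual Parekh--Thompson proof; note that the paper itself does not reprove this lemma at all, but simply cites their Theorem 11 for the $x$-statement and gets the $y$-statement by conjugating the central qubit by $Y$, exactly as you do at the end. However, your proposal has a concrete gap precisely at the step you flag as ``the main technical point.'' The off-diagonal entries do \emph{not} become known constants. Carrying out the Pauli algebra: for $P\neq Q$ the product $(P_iP_{j_\ell})^\dag(Q_iQ_{j_m})$ is $i$ times a Hermitian weight-3 Pauli, hence anti-Hermitian, so those entries vanish by the third constraint (not by ``matching lower-weight entries''); for $P=Q$ the $i$-factors cancel and one is left with $P_{j_\ell}P_{j_m}$, giving
\begin{equation*}
M(S_\ell,S_m)=\tfrac14\left(M(\mathbbm 1,X_{j_\ell}X_{j_m})+M(\mathbbm 1,Y_{j_\ell}Y_{j_m})+M(\mathbbm 1,Z_{j_\ell}Z_{j_m})\right)=\tfrac14\left(-1-2x_{j_\ell j_m}\right),
\end{equation*}
which depends on the leaf--leaf correlations $x_{j_\ell j_m}$ --- free SDP variables, not constants. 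The anticipated ``rank-one-plus-scaled-identity'' structure does not materialize, and testing $N\succeq0$ against $(\lambda,1,\dots,1)$ leaves an inequality still containing these unknowns with the unfavourable sign, so positivity of $N$ alone does not yield $\sum_\ell x_{ij_\ell}\le1$.

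The missing ingredient is a separate level-2 certificate controlling the leaf--leaf terms, namely $x_{j_\ell j_m}\ge-1$ for every pair of leaves. This is itself available from $M\succeq0$: the operator $h'=\mathbbm 1-\vec\sigma_a\cdot\vec\sigma_b$ satisfies $h'^2=4h'$, an identity that propagates entrywise through Constraint \ref{algebraic_constraint}, so $4M(\mathbbm 1,h')=M(h',h')\ge0$. Once this bound is inserted (each off-diagonal entry is then at most $\tfrac14$), the quadratic form with $\lambda=\tfrac{k+2X}{2}$ and $X=\sum_\ell x_{ij_\ell}$ evaluates to $(k+X)(1-X)\ge0$, which together with $X\ge-k$ gives $X\le1$; the diagonal entries work out to $M(S_\ell,S_\ell)=\tfrac{5+4x_{ij_\ell}}{4}$ and $M(\mathbbm 1,S_\ell)=-\tfrac12-x_{ij_\ell}$ as needed. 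So your plan is salvageable, but as written the decisive computation is absent, the one step you identify as hard is resolved incorrectly, and the auxiliary pairwise bound that actually makes the quadratic-form test close is not mentioned. Your reduction of the $y$-case to the $x$-case via $Y$-conjugation on the star is correct and matches the paper.
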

\begin{proof}
The statement $\sum_{j \in S} x_{ij} \leq 1$ is a consequence of Theorem 11 of \cite{Parekh-Thompson}. Note that their variables $\{x_e\}$ for $e = (i,j)$ relate to our variables $\{x_{ij}\}$ by $x_e = (2x_{ij} + 1)/3$.

To deduce $\sum_{j \in S} y_{ij} \leq 1$, note that for a star graph, we can transform the EPR Hamiltonian to the QMC Hamiltonian. We do this by rotating the central qubit by $Y$, see Equation \ref{equivalence}.
\end{proof}

Lemma \ref{starbound_lemma} tells us that for any feasible $M$, the edges of a star graph cannot all be simultaneously close to the relevant entangled state. From Lemma \ref{starbound_lemma} we can deduce the following corollary, which will be used when analysing our algorithms in Sections \ref{algorithm_EPR} and \ref{algorithm_QMC}. The proof is in Appendix \ref{starbound_proof}.

\begin{corollary}\label{starbound_corollary}
Fix edge $(i,j)$ and $\beta \in [0,1]$.
\begin{enumerate}
    \item $\prod_{k\sim i : x_{ik} \geq 0} \sqrt{1 - \beta^2 x_{ik}^2} \geq \sqrt{1 - \beta^2}$
    \item $\prod_{j\neq k\sim i : x_{ik} \geq 0} \sqrt{1 - \beta^2 x_{ik}^2} \geq \sqrt{1 - \beta^2 (1-x_{ij})^2}$
\end{enumerate}
and similarly for $y_{ij}$.
\end{corollary}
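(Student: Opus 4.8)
The plan is to reduce both items to Lemma~\ref{starbound_lemma} via the elementary inequality $1 - t \leq e^{-t}$, which converts the product into a sum. Concretely, write $\prod \sqrt{1-\beta^2 x_{ik}^2} = \exp\!\big(\tfrac12 \sum \log(1-\beta^2 x_{ik}^2)\big)$ and bound each logarithm from below. Since $0 \leq \beta^2 x_{ik}^2 \leq \beta^2 x_{ik} \leq 1$ on the relevant index set (using $0 \leq x_{ik} \leq 1$ and $x_{ik}^2 \leq x_{ik}$), the key is to control $\sum_{k} x_{ik}^2$, or better $\sum_k x_{ik}$, by the monogamy bound $\sum_{k \sim i : x_{ik}\geq 0} x_{ik} \leq 1$. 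Care is needed because the crude bound $\log(1-u) \geq -u/(1-u)$ degrades as $u \to 1$, so I would instead directly compare the product over the index set to a ``worst case'' single-edge configuration.

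For item~1, here is the cleanest route. Among all nonnegative sequences $(x_{ik})$ with $\sum_k x_{ik} \leq 1$ and each $x_{ik}\in[0,1]$, the quantity $\prod_k (1-\beta^2 x_{ik}^2)$ is minimized (for fixed $\beta$) when the mass is concentrated on a single edge, i.e.\ $x_{ik}=1$ for one $k$ and $0$ elsewhere; this is because $\log(1-\beta^2 t^2)$ is a concave-in-the-right-direction/superadditive-type function on $[0,1]$ in the sense that splitting mass $t = t_1 + t_2$ satisfies $(1-\beta^2 t^2) \leq (1-\beta^2 t_1^2)(1-\beta^2 t_2^2)$ whenever $t \leq 1$ (expand: the cross terms are nonnegative and $-\beta^2 t_1^2\beta^2 t_2^2 \leq 0$ but dominated since $2t_1 t_2 \geq 0$ — one checks $(1-\beta^2 t_1^2)(1-\beta^2 t_2^2) - (1-\beta^2(t_1+t_2)^2) = 2\beta^2 t_1 t_2 - \beta^4 t_1^2 t_2^2 = \beta^2 t_1 t_2(2 - \beta^2 t_1 t_2) \geq 0$). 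Hence the product is at least $1-\beta^2$, proving item~1.

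For item~2, the set is $\{k \sim i : k \neq j,\ x_{ik}\geq 0\}$, and the relevant constraint from Lemma~\ref{starbound_lemma} should be applied to the star centered at $i$ including the edge $(i,j)$: since we also have information tying $x_{ij}$ to $y_{ij}$ — specifically I expect the intended reading is that the SDP ensures $x_{ij} \geq 1 - y_{ij}$ (or that one may assume $x_{ij} \geq 1-y_{ij}$ for the edge being improved), so that $\sum_{k\neq j : x_{ik}\geq 0} x_{ik} \leq 1 - x_{ij} \leq y_{ij}$ — wait, this needs the bound $\sum_{k\neq j} x_{ik} \leq 1 - y_{ij}$... Let me restate: apply the monogamy bound and the concentration argument above to the restricted index set with total mass budget $1 - (1-y_{ij}) = y_{ij}$, no: the budget should be such that a single concentrated edge of weight $1-y_{ij}$ is the worst case. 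So I would first establish $\sum_{k \neq j,\, x_{ik}\geq 0} x_{ik} \leq 1 - y_{ij}$ — hmm, this does not follow from Lemma~\ref{starbound_lemma} alone. The honest statement: I expect item~2 requires an additional input relating $x$ and $y$ on the same edge (presumably proved in Section~\ref{Lasserre} or extractable from the level-2 SDP), after which the same superadditivity/concentration lemma applied with mass budget $1-y_{ij}$ on one edge yields $\prod \geq 1 - \beta^2(1-y_{ij})^2$.

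The main obstacle is item~2: unlike item~1, it cannot come purely from monogamy, so the crux is identifying and justifying the inequality linking $x_{ij}$ and $y_{ij}$ (equivalently, a bound of the form $\sum_{k\neq j,\,x_{ik}\geq 0} x_{ik} \le 1 - y_{ij}$) from feasibility of the level-2 Lasserre matrix; once that is in hand, the purely analytic step — the superadditivity inequality $(1-\beta^2 s^2)(1-\beta^2 t^2) \geq 1-\beta^2(s+t)^2$ for $s+t \leq 1$, iterated to reduce to a single concentrated edge — is routine and finishes both parts. The statements ``and similarly for $y_{ij}$'' follow by the identical argument with the roles of the two Hamiltonians swapped, using the second half of Lemma~\ref{starbound_lemma}.
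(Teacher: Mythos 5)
Your treatment of Part (1) is correct and is essentially the paper's own argument: the paper shows the product is minimized by concentrating all the mass on a single edge, via exactly the merging step you formalize, namely $(1-\beta^2 t_1^2)(1-\beta^2 t_2^2) \geq 1-\beta^2(t_1+t_2)^2$ for $t_1,t_2\ge0$, iterated until one factor carries the whole budget $\sum_{k\sim i:\,x_{ik}\ge0} x_{ik} \leq 1$ from Lemma~\ref{starbound_lemma}. One small algebra slip: the difference of the two sides is $2\beta^2 t_1 t_2 + \beta^4 t_1^2 t_2^2$, not $2\beta^2 t_1 t_2 - \beta^4 t_1^2 t_2^2$; both terms are nonnegative, so the conclusion is unaffected (and the superadditivity itself does not even need $t_1+t_2\le1$; that constraint is only needed so that the final lower bound is nonnegative and the square roots are real).

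The obstacle you ran into on Part (2) is caused by a typo in the statement, not by a missing ingredient. The right-hand side should read $\sqrt{1-\beta^2(1-x_{ij})^2}$, with $x_{ij}$ rather than $y_{ij}$: this is the form in which the bound is invoked in the proof of Equation~\ref{beta_lower_bound} in Section~\ref{algorithm_QMC}, and it is what the appendix actually proves (the constraint there becomes $\sum_{k\in S} x_{ik} \le 1 - x_{ij}$). With that reading, no relation between $x_{ij}$ and $y_{ij}$ is needed: when $x_{ij}\ge 0$, Lemma~\ref{starbound_lemma} applied to the set $\{j\}\cup\{k\sim i: k\ne j,\ x_{ik}\ge 0\}$ gives $\sum_{j\ne k\sim i:\,x_{ik}\ge0} x_{ik} \le 1 - x_{ij}$, and your iterated superadditivity with mass budget $1-x_{ij}$ immediately yields $\prod_{j\ne k\sim i:\,x_{ik}\ge0}\left(1-\beta^2 x_{ik}^2\right) \ge 1-\beta^2(1-x_{ij})^2$. (Part (2) is only ever used when $x_{ij}\ge0$, which keeps the right-hand side real.) So, as written, your proposal leaves Part (2) unproved --- the speculated inequality $x_{ij}\ge 1-y_{ij}$ is a red herring --- but the gap closes with no new ideas once the statement is corrected; the ``similarly for $y_{ij}$'' clause then gives the $y$-version $\prod_{j\ne k \sim i :\, y_{ik}\ge0}\sqrt{1-\beta^2 y_{ik}^2} \ge \sqrt{1-\beta^2(1-y_{ij})^2}$ used (with $\beta=1$) in Section~\ref{algorithm_EPR}.
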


See Figure \ref{fig_notation} for a description of the notation $j\neq k\sim i$.

\begin{figure}[H]
\centering
  \includegraphics[width=2in]{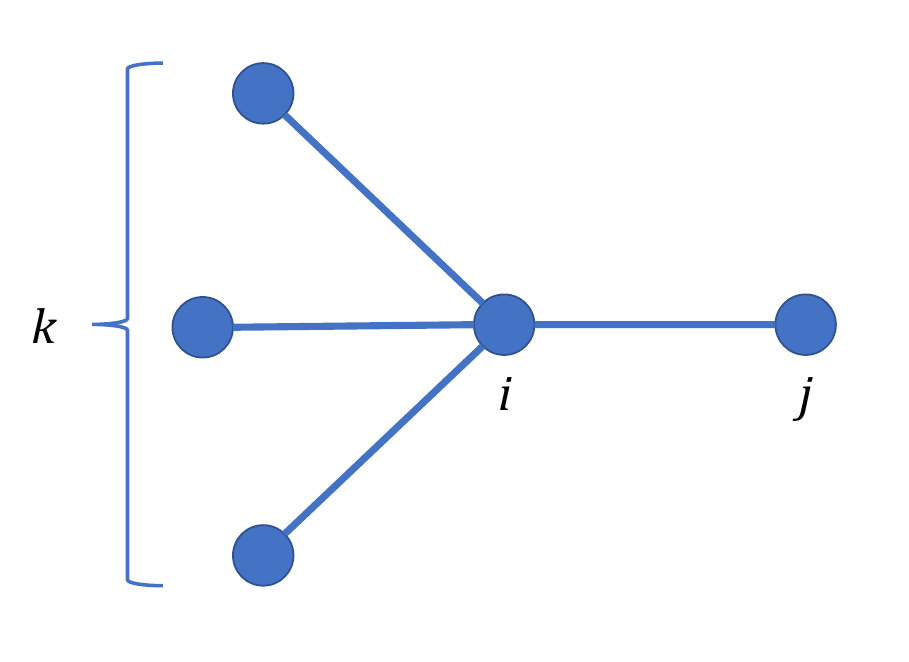}
  \caption{The notation $j\neq k\sim i$ means that $k$ ranges over all vertices that are connected by an edge to $i$, but \emph{not} $j$.}
  \label{fig_notation}
\
\end{figure}

\section{Anshu-Gosset-Morenz quantum circuits}\label{rotation}

We now develop a quantum circuit which improves the energy of a product state, following \cite{AGM,AGMS}. It will consist of 2-qubit unitaries on each edge of the graph, which are designed to rotate the 2-qubit state towards the relevant entangled state. Since it is convenient for these 2-qubit unitaries to commute, they will have the following specific form. Each vertex will be assigned some Hermitian unitary $P_k = \vec{n}_k \cdot \vec{\sigma}_k$ where $\vec{n}_k \in S^2 \subset \mathbb{R}^3$. Then the 2-qubit unitaries will be $\exp(i\theta_{ij} P_i \otimes P_j)$, where $\{\theta_{ij}\}$ are variational parameters associated to each edge. There is a slight abuse of notation here, with the subscript $k$ of $P_k$ denoting that these are \emph{different} matrices, as well as that it is acting on vertex $k\in V$.

To get some intuition for how this unitary acts, consider a single edge $(i,j)$ with the initial product state $|v_i\rangle |v_j\rangle$. Let $\vec{v}_k \in S^2 \subset \mathbb{R}^3$ denote the Bloch vector corresponding to the single-qubit state $|v_k\rangle$, so $|v_k\rangle\langle v_k| = \frac{1}{2} (1 + \vec{v}_k \cdot \vec{\sigma}_k)$.
\begin{equation*}
\exp(i\theta_{ij} P_i \otimes P_j) |v_i\rangle |v_j\rangle = \cos{\theta_{ij}} |v_i\rangle |v_j\rangle + i \sin{\theta_{ij}} P_i|v_i\rangle P_j|v_j\rangle
\end{equation*}

For this to be an entangled state, we would like $P_k |v_k\rangle$ to be orthogonal to $|v_k\rangle$. Thus we would like $\langle v_k|P_k|v_k \rangle = \vec{n}_k \cdot \vec{v}_k$ to equal zero. That is, $\vec{n}_k$ should be orthogonal to $\vec{v}_k$.

\begin{figure}[H]
\centering
  \includegraphics[width=1.5in]{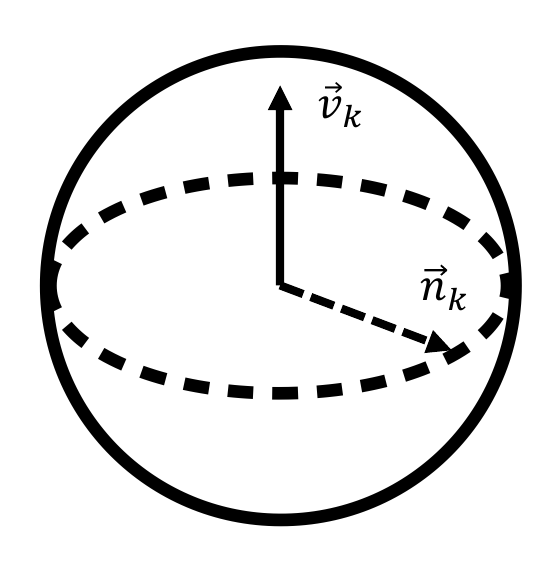}
  \caption{We choose the single-qubit Hermitian unitary $P_k$ to be $P_k = \vec{n}_k \cdot \vec{\sigma}_k$ where $\vec{n}_k$ is a uniformly random unit vector orthogonal to $\vec{v}_k$, $\vec{v}_k \cdot \vec{n}_k = 0$. Geometrically, $\vec{n}_k$ is picked uniformly from the \emph{equator} of the Bloch sphere orthogonal to $\vec{v}_k$.}
\
\end{figure}

Given that this holds, the state becomes
\begin{equation*}
\exp(i\theta_{ij} P_i \otimes P_j) |v_i\rangle |v_j\rangle = \cos{\theta_{ij}} |v_i\rangle |v_j\rangle + e^{i\alpha_{ij}} \sin{\theta_{ij}} |v_i^\perp\rangle |v_j^\perp\rangle
\end{equation*}
where $|v_i^\perp\rangle$ and $|v_j^\perp\rangle$ are single-qubit states orthogonal to $|v_i\rangle$ and $|v_j\rangle$ respectively. $\alpha_{ij}$ is a phase which depends on exactly which $\vec{n}_i$, $\vec{n}_j$ were chosen subject to the orthogonality constraint. Rotating edge $(i,j)$ by $\theta_{ij} = \pi/4$ then corresponds to rotating these two qubits all the way to a maximally entangled 2-qubit state. For us, $\theta_{ij}$ will always be in the range $[0,\pi/4]$.

\subsection{EPR setting}\label{rotation_EPR}

Let's first consider the EPR setting. We want to start with the product state which achieves the highest energy. Without loss of generality, this is the all-zero state $\bigotimes_V |0\rangle$. Then we choose
\begin{equation*}
P_k = \frac{X_k - Y_k}{\sqrt{2}}
\end{equation*}
for all $k$. This indeed has the property $\langle 0|P_k|0 \rangle = 0$. The gate $\exp\left(i \theta_{ij} P_i \otimes P_j \right)$ will be applied to each edge $(i,j)$.
\begin{equation*}
\exp\left(i \theta_{ij} P_i \otimes P_j\right) |00\rangle = \cos{\theta_{ij}} |00\rangle + \sin{\theta_{ij}} |11\rangle
\end{equation*}
As we can see, this unitary rotates $|00\rangle$ towards the EPR state $|\phi^+\rangle = \frac{1}{\sqrt{2}} (|00\rangle + |11\rangle)$. Indeed the choice $\theta_{ij} = \pi/4$ sends $|00\rangle$ precisely to $|\phi^+\rangle$. Letting
\begin{equation*}
\rho = \exp\left(i \theta_{ij} P_i \otimes P_j\right) |00\rangle\langle00| \exp\left(-i \theta_{ij} P_i \otimes P_j\right)
\end{equation*}
we have
\begin{equation} \label{single_edge_intuition_EPR}
\Tr(\phi^+_{ij} \rho) = \frac{1}{2} (1 + \sin{2\theta_{ij}})
\end{equation}
Thus rotating the edge $(i,j)$ will increase the energy of the state on the term $g_{ij}$. However, rotating other edges incident to $i$ or $j$ will have a negative effect on the energy on $g_{ij}$. The following lemma bounds these two competing effects, with the proof in Appendix \ref{rotation_proof_EPR}.

\begin{lemma}\label{rotation_lemma_EPR}
Consider the state $\prod_{kl} \exp\left(\frac{1}{2} i \theta_{kl} (X_k - Y_k) \otimes (X_l - Y_l)\right) \bigotimes_V |0\rangle$. The energy of this state on the Hamiltonian term $g_{ij}$ is given by
\begin{equation}\label{rotation_lemma_EPR_eq}
\langle g_{ij} \rangle \geq \frac{1}{2} + \frac{1}{2} \prod_{j\neq k\sim i} \cos{2\theta_{ik}} \cdot \prod_{i\neq l\sim j} \cos{2\theta_{lj}} + \frac{1}{2} \sin{2\theta_{ij}} \left( \prod_{j\neq k\sim i} \cos{2\theta_{ik}} + \prod_{i\neq l\sim j} \cos{2\theta_{lj}} \right)
\end{equation}
(The notation $j\neq k\sim i$ is shorthand for $k \in \{k \in V : k \neq j , w_{ik} \neq 0\}$.)
\end{lemma}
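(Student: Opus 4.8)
The plan is to diagonalize the (mutually commuting) entangling circuit by passing to a convenient single-qubit basis on each vertex, which turns the energy of $g_{ij}$ into an elementary Fourier sum over sign vectors. Introduce on each qubit $k$ the operators $\widetilde{X}_k = \tfrac{1}{\sqrt 2}(X_k - Y_k)$, $\widetilde{Y}_k = \tfrac{1}{\sqrt 2}(X_k + Y_k)$, $\widetilde{Z}_k = Z_k$. A direct check shows that $(\widetilde{X}_k,\widetilde{Y}_k,\widetilde{Z}_k)$ obey the Pauli relations, that $|0\rangle$ is the $+1$-eigenstate of $\widetilde{Z}_k$, and that the edge gate equals $\exp\!\big(i\theta_{kl}\,\widetilde{X}_k\otimes\widetilde{X}_l\big)$. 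Let $\{|\vec s\rangle : \vec s\in\{\pm1\}^V\}$ denote the common eigenbasis of all the $\widetilde{X}_k$ (so $\widetilde X_k|\vec s\rangle = s_k|\vec s\rangle$); in this basis every gate is diagonal, hence the whole circuit acts as $|\vec s\rangle\mapsto e^{i\Phi(\vec s)}|\vec s\rangle$ with $\Phi(\vec s)=\sum_{kl}\theta_{kl}\,s_k s_l$. Since $|0\rangle$ is, up to an irrelevant per-qubit phase, the uniform superposition of the two $\widetilde X_k$-eigenstates, the output state is
\begin{equation*}
|\Psi\rangle \;=\; \frac{1}{2^{n/2}}\sum_{\vec s\in\{\pm1\}^V} e^{i\Phi(\vec s)}\,|\vec s\rangle .
\end{equation*}
Finally, substituting $X_k=\tfrac1{\sqrt2}(\widetilde{X}_k+\widetilde{Y}_k)$ and $Y_k=\tfrac1{\sqrt2}(\widetilde{Y}_k-\widetilde{X}_k)$ into $g_{ij}=\tfrac12(\mathbbm 1+X_iX_j-Y_iY_j+Z_iZ_j)$ collapses it to the clean form $g_{ij}=\tfrac12\big(\mathbbm 1+\widetilde{X}_i\widetilde{Y}_j+\widetilde{Y}_i\widetilde{X}_j+\widetilde{Z}_i\widetilde{Z}_j\big)$, so it remains to compute the three correlators $\langle\widetilde{X}_i\widetilde{Y}_j\rangle$, $\langle\widetilde{Y}_i\widetilde{X}_j\rangle$, $\langle\widetilde{Z}_i\widetilde{Z}_j\rangle$ in $|\Psi\rangle$.

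Each such correlator is a double sum $2^{-n}\sum_{\vec s,\vec t}e^{i(\Phi(\vec t)-\Phi(\vec s))}\langle\vec s|\,\cdot\,|\vec t\rangle$. Because $\widetilde{X}_k$ is diagonal while $\widetilde{Y}_k$ and $\widetilde{Z}_k$ each flip the $k$-th entry of $|\vec t\rangle$ (with an extra $\pm i\,t_k$ for $\widetilde{Y}_k$), the matrix element forces $\vec s=\vec t^{(F)}$, the vector $\vec t$ with the coordinates in $F$ negated, where $F=\{j\}$, $\{i\}$, $\{i,j\}$ for the three correlators respectively. The key structural point is that $\Phi(\vec t)-\Phi(\vec t^{(F)})$ only involves edges incident to $F$, and when $F=\{i,j\}$ the contribution $\theta_{ij}t_it_j$ cancels since both endpoints flip; this is exactly why $\theta_{ij}$ enters the final formula only through $\sin 2\theta_{ij}$. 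Substituting and summing the free labels $t_m$, $m\notin\{i,j\}$, one vertex at a time, each neighbour $k\sim i$ contributes $\tfrac12\sum_{t=\pm1}e^{\pm 2i\theta_{ik}t}=\cos 2\theta_{ik}$ (and likewise on the $j$ side), while the residual sum over $(t_i,t_j)$ produces $\sin 2\theta_{ij}$ for the $\widetilde{X}\widetilde{Y}$ terms and $1$ for $\widetilde{Z}\widetilde{Z}$. This gives $\langle\widetilde{X}_i\widetilde{Y}_j\rangle=\sin 2\theta_{ij}\prod_{i\neq l\sim j}\cos 2\theta_{lj}$, symmetrically $\langle\widetilde{Y}_i\widetilde{X}_j\rangle=\sin 2\theta_{ij}\prod_{j\neq k\sim i}\cos 2\theta_{ik}$, and $\langle\widetilde{Z}_i\widetilde{Z}_j\rangle=\prod_{j\neq k\sim i}\cos 2\theta_{ik}\prod_{i\neq l\sim j}\cos 2\theta_{lj}$; inserting these into the formula for $g_{ij}$ yields \eqref{rotation_lemma_EPR_eq}.

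The step I expect to demand the most care is the $\langle\widetilde{Z}_i\widetilde{Z}_j\rangle$ evaluation: there both $i$ and $j$ are flipped, so a vertex $m$ that neighbours both of them receives a phase $2(\theta_{im}t_i+\theta_{mj}t_j)t_m$ that couples the $t_i$ and $t_j$ sums, and one must check that after summing $t_m$ and then $(t_i,t_j)$ these cross-terms recombine into the claimed product of cosines. If $i$ and $j$ have at most one common neighbour this is precisely the identity $\cos(x+y)+\cos(x-y)=2\cos x\cos y$; for general graphs I would perform the $(t_i,t_j)$ sum first and the neighbour sums last, so that the contributions of short cycles through the edge $(i,j)$ are exposed explicitly and can be accounted for. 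The two $\widetilde{X}\widetilde{Y}$ correlators are comparatively routine, since only one endpoint flips and the neighbour sums factor immediately; the only bookkeeping there is tracking the $\pm i$ phases introduced by $\widetilde{Y}$.
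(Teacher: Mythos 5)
You take a genuinely different route from the paper, and the core of it is sound. The paper keeps the computational basis, expands the circuit $U$ as a sum over subsets $S\subset K$, $T\subset L$ of the neighbours of $i$ and $j$, and evaluates $\langle\phi^+_{ij}|\cdot|00\rangle$ matrix elements term by term; you instead diagonalise the entire commuting circuit in the common eigenbasis of the $\widetilde X_k$ and reduce everything to three Fourier sums over $\{\pm1\}^V$. All of your structural claims check out: $(\widetilde X,\widetilde Y,\widetilde Z)$ obey the Pauli algebra, the gate is $\exp(i\theta_{kl}\widetilde X_k\widetilde X_l)$, $g_{ij}=\tfrac12\big(\mathbbm 1+\widetilde X_i\widetilde Y_j+\widetilde Y_i\widetilde X_j+\widetilde Z_i\widetilde Z_j\big)$, and the two single-flip correlators factor exactly as you say on an arbitrary graph. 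Your approach buys a cleaner bookkeeping and, as it turns out, more transparency about where the argument is delicate.

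The difficulty you flag in your last paragraph is not just a step requiring care — it is a genuine failure of the stated equality. Completing your $\widetilde Z_i\widetilde Z_j$ sum: each common neighbour $m$ of $i$ and $j$ contributes $\cos(2\theta_{im}t_i+2\theta_{mj}t_j)=\cos2\theta_{im}\cos2\theta_{mj}-t_it_j\sin2\theta_{im}\sin2\theta_{mj}$, and averaging the product over $u=t_it_j$ kills only the terms of odd degree in $u$. With a single common neighbour this is your $\cos(x+y)+\cos(x-y)$ identity and the formula survives; with two common neighbours $m,m'$ one gets $\langle\widetilde Z_i\widetilde Z_j\rangle=\cos2\theta_{im}\cos2\theta_{mj}\cos2\theta_{im'}\cos2\theta_{m'j}+\sin2\theta_{im}\sin2\theta_{mj}\sin2\theta_{im'}\sin2\theta_{m'j}$, with an extra all-sines term that no reordering of the sums will remove. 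Concretely, on the graph with edges $(i,j),(i,m),(m,j),(i,m'),(m',j)$ and all angles $\pi/4$, the right-hand side of Equation \eqref{rotation_lemma_EPR_eq} equals $\tfrac12$ while the true energy is $1$. So Lemma \ref{rotation_lemma_EPR} as an equality holds only when the edge $(i,j)$ lies on at most one triangle.

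You should know that the paper's own proof has exactly the same blind spot: in passing from $\langle0|U^\dag g_{ij}U|0\rangle$ to $\sum_{S,T}[\dots]\,|\cdots|^2$ it discards the cross terms between pairs $(S,T)\neq(S',T')$, which vanish only when they produce orthogonal states on the environment qubits, i.e.\ when $S\triangle T\neq S'\triangle T'$; this is automatic for $K\cap L=\emptyset$ but not otherwise. The reason the algorithm is unharmed is the sign of the correction: since Algorithm \ref{alg_EPR} only uses $\theta_{kl}\in[0,\pi/4]$, every extra term is a product of non-negative $\sin2\theta$'s and $\cos2\theta$'s entering $g_{ij}$ with coefficient $+\tfrac12$, so Equation \eqref{rotation_lemma_EPR_eq} remains valid with $\geq$ in place of $=$, which is all the downstream analysis uses. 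If you write this up, state the lemma as a lower bound (or restrict to edges lying on at most one triangle) and make the non-negativity of the triangle corrections explicit.
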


As we increase the angles $\{\theta_{ij}\}$, the final term in Equation \ref{rotation_lemma_EPR_eq} increases, whilst the middle term decreases. Intuitively, the final term corresponds to the \emph{gain} in energy from the $ij$ rotation, whilst the middle term corresponds to the \emph{loss} in energy from the neighbouring $ik$ and $jl$ rotations.

\subsection{QMC setting}\label{rotation_QMC}

In the QMC setting, the optimal product state is not a fixed state, like the all-zero state $\bigotimes_V |0\rangle$ in the EPR case. Thus we begin with a general product state $\bigotimes_{k\in V} |v_k\rangle$, which we would like to have high energy.

\begin{lemma} \label{Bloch_energy}
Let $\vec{v}_k \in S^2 \subset \mathbb{R}^3$ denote the Bloch vector of the single-qubit state $|v_k\rangle$. The energy of a product state on Hamiltonian term $h_{ij}$ is given by
\begin{equation*}
\langle v_i v_j| h_{ij} |v_i v_j \rangle = 2 \langle v_i v_j| \psi^-_{ij}\rangle \langle\psi^-_{ij}| v_i v_j\rangle = \frac{1}{2} (1 - \vec{v}_i \cdot \vec{v}_j)
\end{equation*}
\end{lemma}
\begin{proof}
Use the representation
\begin{equation*}
|\psi^-_{ij}\rangle \langle\psi^-_{ij}| = \frac{1}{2} (\mathbbm{1} - \text{SWAP}_{ij})
\end{equation*}
where $\text{SWAP}_{ij}$ is the swap operators on qubits $i$ and $j$, and use the fact that
\begin{equation*}
|\langle v_i|v_j\rangle|^2 = \frac{1}{2}(1 + \vec{v}_i \cdot \vec{v}_j)
\end{equation*}
\end{proof}

Define
\begin{equation} \label{E_ij_def}
E_{ij} := \frac{1}{2} (1 - \vec{v}_i \cdot \vec{v}_j)
\end{equation}

For each $k\in V$, independently select $\vec{n}_k \in S^2 \subset \mathbb{R}^3$ uniformly at random from the `equator' of unit vectors satisfying $\vec{n}_k \cdot \vec{v}_k = 0$. Then set $P_k = \vec{n}_k \cdot \vec{\sigma}_k$. Let
\begin{equation*}
\gamma_{ij} = \pi - \arg{\langle v_i|P_j|v_j\rangle\langle v_j|P_i|v_i\rangle}
\end{equation*}

\begin{lemma} \label{equator_lem}
\begin{enumerate}
    \item $\langle v_k|P_k|v_k \rangle = 0$ for each $k$.
    \item The Bloch vector of $P_k|v_k\rangle$ is the \emph{negative} of the Bloch vector of $|v_k\rangle$.
    \item $\gamma_{ij} \in [-\pi,\pi)$ is uniformly random.
\end{enumerate}
\end{lemma}
\begin{proof}
1 and 2 are easy to verify. To see 3, note that $P_i|v_i\rangle$ and $P_j|v_j\rangle$ have independent uniformly random phases.
\end{proof}

In what follows, there is a slight abuse of notation. The subscript $k\in V$ has two roles: it tells us which matrix $P_k$ is, and which qubit it is acting on.
\begin{equation*}
\exp(i\theta_{ij} P_i \otimes P_j) |v_i\rangle |v_j\rangle = \cos{\theta_{ij}} |v_i\rangle |v_j\rangle + i \sin{\theta_{ij}} P_i|v_i\rangle P_j|v_j\rangle
\end{equation*}

Whether or not this state resembles the singlet state $|\psi^-\rangle = \frac{1}{\sqrt{2}} (|01\rangle - |10\rangle)$ depends on the random choices $\vec{n}_i, \vec{n}_j$. Let
\begin{equation*}
\rho = \exp\left(i \theta_{ij} P_i \otimes P_j\right) |v_i v_j\rangle\langle v_i v_j| \exp\left(-i \theta_{ij} P_i \otimes P_j\right)
\end{equation*}
For intuition, let's calculate $\Tr(\psi^-_{ij} \rho)$.
\begin{align*}
\Tr(\psi^-_{ij} \rho) &= \left| \cos{\theta_{ij}} \langle\psi^-_{ij}|v_i v_j\rangle + i \sin{\theta_{ij}} \langle\psi^-_{ij}| P_i \otimes P_j |v_i v_j\rangle \right|^2 \\
&= \cos^2{\theta_{ij}} \langle v_i v_j |\psi^-_{ij}\rangle\langle \psi^-_{ij}|v_i v_j\rangle \\
&\qquad - 2 \sin{\theta_{ij}} \cos{\theta_{ij}} \im{\langle v_i v_j |\psi^-_{ij}\rangle\langle \psi^-_{ij}|P_i \otimes P_j|v_i v_j\rangle} \\
&\qquad + \sin^2{\theta_{ij}} \langle v_i v_j |P_i \otimes P_j|\psi^-_{ij}\rangle\langle \psi^-_{ij}|P_i \otimes P_j|v_i v_j\rangle
\end{align*}
From Lemma \ref{Bloch_energy},
\begin{align*}
\langle v_i v_j |\psi^-_{ij}\rangle\langle \psi^-_{ij}|v_i v_j\rangle &= \frac{1}{2} E_{ij} \\
\langle v_i v_j |P_i \otimes P_j|\psi^-_{ij}\rangle\langle \psi^-_{ij}|P_i \otimes P_j|v_i v_j\rangle &= \frac{1}{2} E_{ij}
\end{align*}
The second equation is true since the angle between the Bloch vectors of $P_i|v_i\rangle$ and $P_j|v_j\rangle$ is the same as that between $|v_i\rangle$ and $|v_j\rangle$.
Using the representation $|\psi^-_{ij}\rangle \langle\psi^-_{ij}| = (\mathbbm{1} - \text{SWAP}_{ij}) / 2$, and recalling $\langle v_i|P_i|v_i\rangle = \langle v_j|P_j|v_j\rangle = 0$, we can write
\begin{align*}
\langle v_i v_j |\psi^-_{ij}\rangle\langle \psi^-_{ij}|P_i \otimes P_j|v_i v_j\rangle &= -\frac{1}{2} \langle v_j|P_i|v_i\rangle\langle v_i|P_j|v_j\rangle \\
&= \frac{1}{2} e^{-i \gamma_{ij}} \left|\langle v_j|P_i|v_i\rangle\langle v_i|P_j|v_j\rangle\right| \\
&= \frac{1}{2} e^{-i \gamma_{ij}} E_{ij}
\end{align*}
The final equality uses Equation \ref{E_ij_def} and that the Bloch vectors of $P_i|v_i\rangle$, $P_j|v_j\rangle$ are the negatives of the Bloch vectors of $|v_i\rangle$, $|v_j\rangle$.
Thus we can complete the calculation as
\begin{equation} \label{single_edge_intuition_QMC}
\Tr(\psi^-_{ij} \rho) = \frac{1}{2} E_{ij} \left(1 + \sin{2\theta_{ij}} \sin{\gamma_{ij}} \right)
\end{equation}
This is reminiscent of Equation \ref{single_edge_intuition_EPR}.

$\gamma_{ij} \in [0,\pi)$ corresponds to rotating $|v_i\rangle |v_j\rangle$ towards the singlet, whilst $\gamma_{ij} \in [-\pi,0)$ in fact corresponds to rotating $|v_i\rangle |v_j\rangle$ \emph{away from} the singlet. To deal with this, we introduce a sign degree of freedom $\varepsilon_{ij} \in \{+1,-1\}$, so that the gates of the circuit are in fact $\exp(i \varepsilon_{ij} \theta_{ij} P_i \otimes P_j)$. This sign can be chosen to ensure we rotate every edge \emph{towards} the singlet as follows.

\begin{equation} \label{sign_choice}
\arg \ \langle v_i|P_j|v_j\rangle\langle v_j|P_i|v_i\rangle =
	\begin{cases}
	[0,\pi) & \rightarrow \ \varepsilon_{ij} = +1 \\
	[-\pi,0) & \rightarrow \ \varepsilon_{ij} = -1
	\end{cases}
\end{equation}

Analogous to Lemma \ref{rotation_lemma_EPR}, there is the following result, with the proof in Appendix \ref{rotation_proof_QMC}. This is the only place in the paper where we use the triangle-free assumption.

\begin{lemma}\label{rotation_lemma_QMC}
Given an initial product state $\bigotimes_{k\in V} |v_k\rangle$, choose single-qubit matrices $\{P_k\}_{k\in V}$ and signs $\{\varepsilon_{kl}\}$ as described above. Consider the state $\prod_{kl} \exp(i \varepsilon_{kl} \theta_{kl} P_k \otimes P_l) \bigotimes_{m \in V} |v_m\rangle$, where the operators range over all edges $k\neq l$ of a triangle-free graph. The expected energy of this state on the Hamiltonian term $h_{ij}$, averaged over the random choices of $\{P_k\}_{k\in V}$, is given by
\begin{equation}\label{rotation_lemma_QMC_eq}
\langle h_{ij} \rangle \geq E_{ij} \left( \frac{1}{2} + \frac{1}{2} \prod_{j\neq k\sim i} \cos{2\theta_{ik}} \cdot \prod_{i\neq l\sim j} \cos{2\theta_{lj}} + \frac{1}{\pi} \sin{2\theta_{ij}} \left( \prod_{j\neq k\sim i} \cos{2\theta_{ik}} + \prod_{i\neq l\sim j} \cos{2\theta_{lj}} \right) \right)
\end{equation}
(The notation $j\neq k\sim i$ is shorthand for $k \in \{k \in V : k \neq j , w_{ik} \neq 0\}$.)
\end{lemma}

As we increase the angles $\{\theta_{ij}\}$, the final term in Equation \ref{rotation_lemma_QMC_eq} increases, whilst the middle term decreases. Intuitively, the final term corresponds to the \emph{gain} in energy from the $ij$ rotation, whilst the middle term corresponds to the \emph{loss} in energy from the neighbouring $ik$ and $jl$ rotations.

Notice that, comparing to Lemma \ref{rotation_lemma_EPR}, the final term is worse by an extra factor of $\frac{2}{\pi}$. This comes from the uniformly random phases $\gamma_{ij} \in [0,2\pi)$.

\section{Gharibian-Parekh product state}\label{product}

In \cite{Gharibian-Parekh}, Gharibian and Parekh give an approximation algorithm for QMC which outputs a product state. Their algorithm (call it GP rounding) rounds the solution of an SDP relaxation to a feasible product state analogously to the Goemans-Williamson hyperplane rounding algorithm.

The SDP they use is in fact the level-1 Lasserre hierarchy $\text{Lasserre}_1(H)$. In this case we must write the objective as
\begin{equation*}\label{level_1_obj}
\text{max} \ \frac{1}{2} \sum_{ij} w_{ij} \left(1 - M(X_i,X_j) - M(Y_i,Y_j) - M(Z_i,Z_j)\right)
\end{equation*}

We will run GP rounding on $\text{Lasserre}_2(H)$. To do this, we first solve $\text{Lasserre}_2(H)$ and then take the submatrix indexed by $\mathcal{P}^{(1)}_n = \{X_1, Y_1, \dots, Y_n, Z_n\}$ to feed into the GP rounding scheme.

GP rounding outputs a product state $\bigotimes_{k \in V} |v_k\rangle$. Let $\vec{v}_k \in S^2 \subset \mathbb{R}^3$ be the corresponding Bloch vectors. Recall Lemma \ref{Bloch_energy}
\begin{equation*}
E_{ij} = \langle v_i| \langle v_j| h_{ij} |v_i \rangle |v_j \rangle = \frac{1}{2} (1 - \vec{v}_i \cdot \vec{v}_j)
\end{equation*}

The following lemma describes the quality of GP rounding relative to the SDP objective. Recall the definition of $\{x_{ij}\}$ from Equation \ref{x_values}, which we can write using Constraint 2 as
\begin{equation*}
1 + x_{ij} = \frac{1}{2} \left(1 - M(X_i,X_j) - M(Y_i,Y_j) - M(Z_i,Z_j)\right)
\end{equation*}

\begin{lemma}
\begin{equation}\label{eq_prod}
E_{ij} \geq f(x_{ij}) := \frac{1}{2} - \frac{4}{3\pi} \left(-\frac{1+2x_{ij}}{3}\right) {}_2F_1\left(1/2,1/2;5/2; \left(-\frac{1+2x_{ij}}{3}\right)^2\right)
\end{equation}
where ${}_2F_1$ is the hypergeometric function defined by
\begin{equation*}
{}_2F_1(a,b;c;z) = \sum_{n=0}^{\infty} \frac{(a)_n (b)_n}{(c)_n} \frac{z^n}{n!} \ , \ (t)_n = \frac{\Gamma(t+n)}{\Gamma(t)} = t(t+1) \dots (t+n-1)
\end{equation*}
Note Equation \ref{eq_prod} defines the function $f$.
\end{lemma}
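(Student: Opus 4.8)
The statement is the edge-wise guarantee of Gharibian--Parekh (GP) rounding~\cite{Gharibian-Parekh}, so my plan is to re-derive it by tracking how that rounding transforms the SDP data. First I would record the setup. Solving $\text{Lasserre}_2(H)$ and keeping the principal submatrix indexed by $\mathcal{P}^{(1)}_n=\{X_1,\dots,Z_n\}$ gives a feasible point of $\text{Lasserre}_1(H)$, whose Gram decomposition yields, for each qubit $i$, unit vectors $u_i^X,u_i^Y,u_i^Z$; the constraints $M(A,A)=1$ and $M(A,B)=0$ for $A^\dagger B$ non-Hermitian (applied to $X_iY_i=iZ_i$, etc.) force these to be orthonormal within each qubit. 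The only inter-qubit data is the $3\times3$ cross-Gram $T_{ij}$ with entries $M(\sigma_i^a,\sigma_j^b)$, which is a contraction because $M\succeq0$; by Constraint \ref{algebraic_constraint} and Equation \ref{x_values}, $\Tr T_{ij}=M(X_i,X_j)+M(Y_i,Y_j)+M(Z_i,Z_j)=-1-2x_{ij}$, so the argument $-\tfrac{1+2x_{ij}}{3}$ appearing in Equation \ref{eq_prod} is exactly $\tfrac13\Tr T_{ij}$.

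Next, GP rounding maps these vectors to Bloch vectors by a Goemans--Williamson-style random projection: draw a standard Gaussian $g$ in the ambient space, set $a_i=(g\cdot u_i^X,g\cdot u_i^Y,g\cdot u_i^Z)$ and $\vec v_i=a_i/|a_i|$, and output $\bigotimes_k|v_k\rangle$. By Equation \ref{Bloch_energy}, $\mathbb{E}[E_{ij}]=\tfrac12(1-\mathbb{E}[\vec v_i\cdot\vec v_j])$, and $(a_i,a_j)$ is jointly Gaussian with unit marginals and cross-covariance $T_{ij}$, so $\mathbb{E}[\vec v_i\cdot\vec v_j]$ is a fixed functional of $T_{ij}$. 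The next step is to show that, among contractions with a fixed trace, this functional is controlled by the scalar matrix $\tfrac13(\Tr T_{ij})I=:zI$: one averages the SDP solution over the diagonal action $U\mapsto U^{\otimes n}$ (legitimate since $H$ is rotation-invariant), which replaces $T_{ij}$ by its isotropic part while preserving feasibility, the objective, and every $x_{ij}$, and then handles the residual asymmetry by a convexity argument. It therefore suffices to compute the scalar case.

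With $T_{ij}=zI$ one has $a_j=za_i+\sqrt{1-z^2}\,c$ for an independent Gaussian $c$; splitting $c$ into its component along $a_i$ and the orthogonal plane reduces $\mathbb{E}[\vec v_i\cdot\vec v_j]$ to $\mathbb{E}\bigl[W/\sqrt{W^2+S^2}\bigr]$ with $W=z|a_i|+\sqrt{1-z^2}\,c_\parallel$ and $S^2=(1-z^2)(c_2^2+c_3^2)$ independent of $W$. Using the representation $A^{-1/2}=\pi^{-1/2}\int_0^\infty t^{-1/2}e^{-tA}\,dt$ (or Euler's integral representation of ${}_2F_1$) to integrate out $S^2$ and then $W$, and resumming, turns this into $\tfrac{8}{3\pi}\,z\,{}_2F_1(1/2,1/2;5/2;z)$; the parameters $(1/2,1/2;5/2)$, with $5/2=3/2+1$, are the fingerprint of integrating $|\cdot|^{-1}$ against a three-dimensional Gaussian. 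This yields $\mathbb{E}[E_{ij}]\ge f(x_{ij})$, and a standard application of the method of conditional expectations then fixes a single product state for which the weighted energy $\sum_{ij}w_{ij}E_{ij}$ meets the bound, which is what the lemma records.

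The step I expect to be the real obstacle is the special-function computation: resumming the nested Gaussian integral into a single ${}_2F_1$ with exactly those parameters, the linear argument $z$, and the constant $\tfrac{8}{3\pi}$, and pinning down the direction of the inequality (equivalently, confirming that the scalar matrix is the extremal cross-Gram of given trace). The remaining ingredients --- feasibility of the level-1 restriction, orthonormality of the per-qubit triples, the rotation-averaging reduction, and the Gaussian moment identities --- are routine and follow the GP template.
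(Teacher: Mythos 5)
The paper itself does not prove this lemma---it is imported wholesale from Lemma 10 in Appendix C of \cite{Gharibian-Parekh} (in turn from \cite{BOV12})---so you are reconstructing the cited result rather than paralleling anything in the text. Your setup (feasibility of the level-1 principal submatrix, orthonormality of the per-qubit triples, $\Tr T_{ij}=-1-2x_{ij}$) is correct, but your description of the rounding is not GP's, and for the rounding you describe the stated bound is actually \emph{false}. If you project each of the three Gram vectors of qubit $i$ separately onto a single Gaussian direction, the cross-covariance of $(a_i,a_j)$ is a general $3\times3$ contraction, and the isotropic contraction of a given trace is not extremal for the nonlinear functional $\mathbb{E}[\vec v_i\cdot\vec v_j]$. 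Concretely, if the edge marginal of the SDP solution is the moment matrix of the triplet state $|\phi^-\rangle$, then $T_{ij}=\mathrm{diag}(-1,1,1)$ and $x_{ij}=-1$, and your rounding gives $\mathbb{E}[\vec v_i\cdot\vec v_j]=\mathbb{E}\left[(-g_1^2+g_2^2+g_3^2)/|g|^2\right]=\frac13$, hence $\mathbb{E}[E_{ij}]=\frac13$, which is strictly below $f(-1)\approx0.357$. Your proposed repair---averaging the SDP solution over the diagonal $U^{\otimes n}$ action---cannot work, because the rounding functional is nonlinear in $T_{ij}$: bounding it on the symmetrized solution says nothing about the solution actually rounded, and the example above shows that no convexity argument closes the gap.

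What GP actually do is isotropize \emph{before} rounding rather than after. The matrix $N$ with entries $N_{ij}=\frac13\sum_{a\in\{X,Y,Z\}}M(\sigma_i^a,\sigma_j^a)$ is positive semidefinite with unit diagonal (it is $\frac13(G_X+G_Y+G_Z)$ for three principal blocks $G_a$ of $M$), so it is the Gram matrix of unit vectors $\tilde u_i$ with $\tilde u_i\cdot\tilde u_j=-\frac{1+2x_{ij}}{3}$; the Bloch vectors are $\vec v_i=Z\tilde u_i/|Z\tilde u_i|$ for a single Gaussian $3\times N$ matrix $Z$. The cross-covariance of $(Z\tilde u_i,Z\tilde u_j)$ is then $(\tilde u_i\cdot\tilde u_j)\,I_3$ by construction, no extremality or averaging lemma is needed, and the Gaussian computation you outline applies verbatim---indeed with equality rather than inequality. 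One correction to that computation: since $\mathbb{E}[\vec v_i\cdot\vec v_j]$ is an odd function of $\rho=\tilde u_i\cdot\tilde u_j$ (replace $\tilde u_j$ by $-\tilde u_j$), the resummation must yield $\frac{8}{3\pi}\rho\,{}_2F_1\!\left(1/2,1/2;5/2;\rho^2\right)$ with a \emph{squared} argument, not the linear argument you state; the displayed Equation \ref{eq_prod} appears to drop this square as a typo, and only the squared version gives $f(1)=1$ and reproduces the quoted minimum ratio $0.498$ at $x=0.949$.
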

\begin{proof}
This is implied by Lemma 10 in Appendix C of \cite{Gharibian-Parekh}, which is in turn taken from \cite{BOV12}.
\end{proof}

\begin{figure}[H]
\centering
\begin{subfigure}{.5\textwidth}
  \centering
  \includegraphics[width=3in]{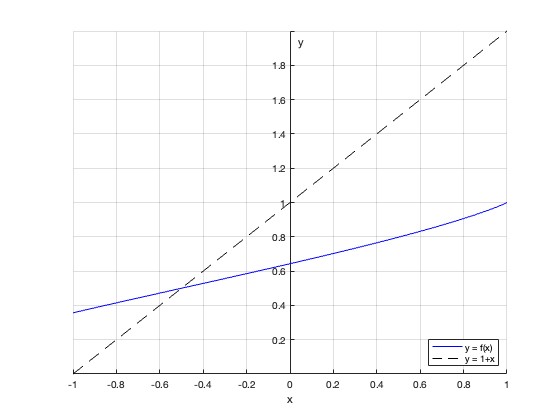}
  \caption{}
  \label{plt_prod_abs}
\end{subfigure}%
\begin{subfigure}{.5\textwidth}
  \centering
  \includegraphics[width=3in]{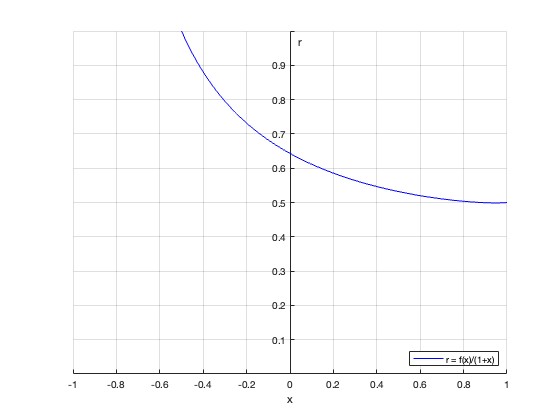}
  \caption{}
  \label{plt_prod_ratio}
\end{subfigure}
\caption{Dropping the subscript $i,j$, it is perhaps instructive to see plots of $f(x)$ versus $1+x$, as well as the ratio $f(x)/(1+x)$. Recall $f(x_{ij})$ is the energy achieved by the Gharibian-Parekh product state on an edge $(i,j)$, whilst the SDP relaxation gives an energy upper bound of $1+x_{ij}$ on that edge. (Given edge $(i,j)$, $x_{ij}$ is defined in the main text.) The minimum of $f(x)/(1+x)$, shown in Figure \ref{plt_prod_ratio}, is $0.498$ at $x = 0.949$. Therefore $0.498$ is the worst-case approximation ratio of the product state on QMC, as shown in \cite{Gharibian-Parekh}. This is almost optimal, in the sense that the approximation ratio of an algorithm outputting a product state cannot exceed $1/2$.}
\label{plt_prod}
\end{figure}

\section{Rounding algorithms}\label{algorithms}

The challenge in designing the algorithms is in choosing the rotation angles $\theta_{ij}$ for the circuit from Section \ref{rotation}. For low degree graphs there is not much frustration, so it should be possible to rotate almost all the way to $\theta_{ij} = \pi/4$. On the other hand for high degree graphs, product states are expected to provide good approximations to the ground state, so in this case the angles $\theta_{ij}$ should be small. To make these choices, we consult the values $x_{ij}$ or $y_{ij}$ derived from $\text{Lasserre}_2$ in Section \ref{Lasserre}. $x_{ij}$ and $y_{ij}$ measure how much entanglement $\text{Lasserre}_2$ wants to place on the edge $(i,j)$. Crucially, Lemma \ref{starbound_lemma} tells us that $\text{Lasserre}_2$ `knows about' monogamy of entanglement.

\subsection{Algorithm for EPR}\label{algorithm_EPR}

We first give our algorithm for the EPR Hamiltonian problem on any graph.

\begin{algorithm}\label{alg_EPR}
Input: weighted graph $(V, \{w_{ij}\})$ with $\sum_{ij} w_{ij} = 1$.
\begin{enumerate}
	\item Solve $\text{Lasserre}_2(G)$ from Definition \ref{SDP} to get solution $M \in \mathbb{R}^{\mathcal{P}^{(2)}_n \times \mathcal{P}^{(2)}_n}$.
	\item For each edge $(i,j)$ compute $y_{ij} = \frac{1}{2} \left(-1 + M(\mathbbm{1},X_iX_j) - M(\mathbbm{1},Y_iY_j) + M(\mathbbm{1},Z_iZ_j)\right)$.
	\item Set $\theta_{ij} =
	\begin{cases}
	\frac{1}{2} \arcsin{y_{ij}} & y_{ij} \geq 0 \\
	0 & \text{\emph{otherwise}}
	\end{cases}$
	\item Compute $\eta = \sum_{ij} w_{ij} y_{ij}$.
	\item If $\eta \geq \sqrt{2}-1$, return the state $\prod_{ij} \exp\left(\frac{1}{2} i \theta_{ij} (X_i - Y_i) \otimes (X_j - Y_j)\right) \bigotimes_V |0\rangle$ with energy $E \geq \frac{1}{2} + \eta + \frac{1}{2} \eta^2$. Else if $\eta < \sqrt{2}-1$, return the state $\bigotimes_V |0\rangle$ with energy $E = 1$.
\end{enumerate}
\end{algorithm}

The first thing the EPR algorithm does is formulate and solve $\text{Lasserre}_2(G)$ from Section \ref{Lasserre}. It then computes the values $y_{ij}$ for each edge $(i,j)$ defined in Equation \ref{y_values}, and sets
\begin{equation}
\theta_{ij} =
	\begin{cases}
	\frac{1}{2} \arcsin{y_{ij}} & y_{ij} \geq 0 \\
	0 & \text{otherwise}
	\end{cases}
\end{equation}
To motivate this assignment, notice that for a single edge
\begin{equation*}
\Tr \left(g_{ij} \exp\left(\frac{1}{2} i \theta_{ij} (X_i - Y_i) \otimes (X_j - Y_j)\right) |00\rangle\langle00| \exp\left(-\frac{1}{2} i \theta_{ij} (X_i - Y_i) \otimes (X_j - Y_j)\right)\right) = 1 + \sin{2\theta_{ij}}
\end{equation*}
From $\text{Lasserre}_2(G)$ we have $\Tr(g_{ij} \tilde{\rho}) = 1 + y_{ij}$ for the pseudo-density $\tilde{\rho}$. We have chosen $\theta_{ij}$ so that this matches Equation \ref{single_edge_intuition_EPR} for a single edge.

The algorithm then applies the circuit of commuting gates $\prod_{ij} \exp\left(\frac{1}{2} i \theta_{ij} (X_i - Y_i) \otimes (X_j - Y_j)\right)$ to the product state $\bigotimes_V |0\rangle$ as described in Section \ref{rotation_EPR}.

The energy of the rotated state $\prod_{ij} \exp\left(\frac{1}{2} i \theta_{ij} (X_i - Y_i) \otimes (X_j - Y_j)\right) \bigotimes_V |0\rangle$ can be computed from Lemma \ref{rotation_lemma_EPR} and Corollary \ref{starbound_corollary}.

\begin{lemma}
\begin{equation}
\langle g_{ij} \rangle \geq \frac{1}{2} + y_{ij} + \frac{1}{2} y_{ij}^2
\end{equation}
\end{lemma}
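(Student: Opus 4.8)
The plan is to start from the closed form for $\langle g_{ij}\rangle$ in Lemma~\ref{rotation_lemma_EPR}, substitute the angles chosen by Algorithm~\ref{alg_EPR}, and then feed the monogamy bound of Corollary~\ref{starbound_corollary} into a completing-the-square rearrangement. Write $s:=\sin 2\theta_{ij}$ and abbreviate the two product factors in Equation~\ref{rotation_lemma_EPR_eq} as $c_i:=\prod_{j\neq k\sim i}\cos 2\theta_{ik}$ and $c_j:=\prod_{i\neq l\sim j}\cos 2\theta_{lj}$. Since the algorithm sets $\sin 2\theta_{kl}=\max\{y_{kl},0\}\in[0,1]$ on every edge, all angles lie in $[0,\pi/4]$, so $c_i,c_j\in[0,1]$ and $s\ge 0$. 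All three terms of Equation~\ref{rotation_lemma_EPR_eq} are then nonnegative, which already gives the floor $\langle g_{ij}\rangle\ge\tfrac12$. For the refined bound I would use the key algebraic identity
\begin{equation*}
\langle g_{ij}\rangle=\tfrac12+\tfrac12 c_i c_j+\tfrac12 s(c_i+c_j)=\tfrac12(1-s^2)+\tfrac12(c_i+s)(c_j+s),
\end{equation*}
i.e.\ the cross terms complete into a product shifted by $s$.

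The substance is a lower bound on $c_i$ and $c_j$, and here I split on the sign of $y_{ij}$. If $y_{ij}\ge0$ then $s=y_{ij}$, and for each neighbour $k$ of $i$ we have $\cos 2\theta_{ik}=\sqrt{1-y_{ik}^2}$ when $y_{ik}\ge0$ and $\cos 2\theta_{ik}=1$ when $y_{ik}<0$; hence $c_i=\prod_{j\neq k\sim i:\,y_{ik}\ge0}\sqrt{1-y_{ik}^2}$, and part~2 of Corollary~\ref{starbound_corollary} with $\beta=1$ (in its $y$-version) gives $c_i\ge\sqrt{1-(1-y_{ij})^2}=\sqrt{y_{ij}(2-y_{ij})}\ge0$, and identically for $c_j$. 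Because every quantity is nonnegative, substituting this common lower bound for $c_i,c_j$ in the product is legitimate, and expanding the square yields
\begin{equation*}
\langle g_{ij}\rangle\ge\tfrac12(1-y_{ij}^2)+\tfrac12\bigl(\sqrt{y_{ij}(2-y_{ij})}+y_{ij}\bigr)^2=\tfrac12+y_{ij}+\tfrac12 y_{ij}^2+y_{ij}\bigl(\sqrt{y_{ij}(2-y_{ij})}-y_{ij}\bigr).
\end{equation*}
The trailing term is nonnegative since $y_{ij}\ge0$ and $y_{ij}(2-y_{ij})\ge y_{ij}^2\iff y_{ij}\le1$, which always holds; this gives $\langle g_{ij}\rangle\ge\tfrac12+y_{ij}+\tfrac12 y_{ij}^2$. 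In the complementary case $y_{ij}<0$ the algorithm sets $\theta_{ij}=0$, so $s=0$ and $\langle g_{ij}\rangle=\tfrac12+\tfrac12 c_i c_j\ge\tfrac12$; and since $-1\le y_{ij}<0$ one has $\tfrac12\ge\tfrac12(1+y_{ij})^2=\tfrac12+y_{ij}+\tfrac12 y_{ij}^2$, which is precisely the displayed chain. Thus in all cases $\langle g_{ij}\rangle\ge\tfrac12$ and $\langle g_{ij}\rangle\ge\tfrac12+y_{ij}+\tfrac12 y_{ij}^2$, the latter being what is summed against $w_{ij}$ (with Jensen, using $\sum_{ij}w_{ij}=1$) to produce the $E\ge\tfrac12+\eta+\tfrac12\eta^2$ claim in the algorithm.

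I expect the main obstacle to be invoking Corollary~\ref{starbound_corollary} correctly: one must recognise that the product of cosines over the neighbourhood of $i$ is exactly a product of $\sqrt{1-y_{ik}^2}$ factors (with the negative-$y$ edges dropping out as harmless factors of $1$), that this is precisely the quantity the corollary lower-bounds, and that $\beta=1$ is the right choice to land on $\sqrt{y_{ij}(2-y_{ij})}$. After that the remainder is the completing-the-square identity plus the elementary fact $\sqrt{y_{ij}(2-y_{ij})}\ge y_{ij}$ on $[0,1]$. A secondary point requiring care is the monotonicity step: replacing $c_i,c_j$ by lower bounds inside $(c_i+s)(c_j+s)$ is valid only because $c_i+s,c_j+s\ge0$, which is why it is worth establishing at the outset that every rotation angle lies in $[0,\pi/4]$ (so all cosines are nonnegative).
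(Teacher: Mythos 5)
Your proposal is correct and follows essentially the same route as the paper: both start from Lemma~\ref{rotation_lemma_EPR}, note that negative-$y$ edges contribute harmless factors of $1$, invoke part~2 of Corollary~\ref{starbound_corollary} with $\beta=1$ to lower-bound each cosine product by $\sqrt{1-(1-y_{ij})^2}$, and split on the sign of $y_{ij}$. Your completing-the-square identity $\tfrac12+\tfrac12 c_ic_j+\tfrac12 s(c_i+c_j)=\tfrac12(1-s^2)+\tfrac12(c_i+s)(c_j+s)$ is only a cosmetic repackaging of the paper's term-by-term bounds on $c_i c_j$ and $c_i+c_j$ and yields the same final estimate.
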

\begin{proof}
First suppose $y_{ij} \geq 0$. Using Lemma \ref{rotation_lemma_EPR},
\begin{align*}
\langle g_{ij} \rangle &\geq \frac{1}{2} + \frac{1}{2} \prod_{j\neq k\sim i : y_{ik} \geq 0} \sqrt{1-y_{ik}^2} \cdot \prod_{i\neq l\sim j : y_{lj} \geq 0} \sqrt{1-y_{lj}^2} \\
&\quad + \frac{1}{2} y_{ij} \left(\prod_{j\neq k\sim i : y_{ik} \geq 0} \sqrt{1-y_{ik}^2} + \prod_{i\neq l\sim j : y_{lj} \geq 0} \sqrt{1-y_{lj}^2}\right) \\
&\geq \frac{1}{2} + y_{ij} + \frac{1}{2} y_{ij}^2
\end{align*}
where we used Corollary \ref{starbound_corollary} with $\beta = 1$ to write
\begin{equation*}
\prod_{j\neq k\sim i : y_{ik} \geq 0} \sqrt{1-y_{ik}^2} + \prod_{i\neq l\sim j : y_{lj} \geq 0} \sqrt{1-y_{lj}^2} \ \geq \ 2\sqrt{1 - (1-y_{ij})^2} \ \geq \ 2y_{ij}
\end{equation*}
\begin{equation*}
\prod_{j\neq k\sim i : y_{ik} \geq 0} \sqrt{1-y_{ik}^2} \cdot \prod_{i\neq l\sim j : y_{lj} \geq 0} \sqrt{1-y_{lj}^2} \ \geq \ 1 - (1-y_{ij})^2 \ = \ 2y_{ij} - y_{ij}^2
\end{equation*}

Now if $y_{ij} < 0$ then we also have $\langle g_{ij} \rangle \geq \frac{1}{2} \geq \frac{1}{2} + y_{ij} + \frac{1}{2} y_{ij}^2$.
\end{proof}

Thus the total energy of the rotated state is
\begin{equation*}
\langle G \rangle \geq \sum_{ij} w_{ij} \left(\frac{1}{2} + y_{ij} + \frac{1}{2} y_{ij}^2\right)
\end{equation*}

To help analyse the algorithm, let's interpret $w_{ij}$ as a probability measure on the edges of the graph. We can then write
\begin{align*}
\langle G \rangle &\geq \mathbb{E} \left(\frac{1}{2} + y_{ij} + \frac{1}{2} y_{ij}^2\right) \\
&\geq \frac{1}{2} + \mathbb{E} y_{ij} + \frac{1}{2} (\mathbb{E} y_{ij})^2 \\
&= \frac{1}{2} + \eta + \frac{1}{2} \eta^2
\end{align*}
where we have used Jensen's inequality and defined
\begin{equation*}
\eta := \mathbb{E} y_{ij} = \sum_{ij} w_{ij} y_{ij}
\end{equation*}
Note that the use of Jensen's inequality relies on the fact that $\frac{1}{2} + y + \frac{1}{2} y^2$ is a convex function in $y$.

The rotated state does well only if the $y_{ij}$ values are large i.e. close to 1. This corresponds to the case of low degree and low frustration. If the $y_{ij}$ values are small or negative, it is possible that the rotated state will have energy smaller than that of the original product state $\bigotimes_V |0\rangle$, which achieves $\langle G \rangle = 1$. This is the high degree and high frustration case. In this case, the algorithm can simply use the original product state.

We remark that the feature that our algorithm is completely deterministic.

\begin{theorem} \label{EPR_alg_thm}
The output energy of Algorithm \ref{alg_EPR} satisfies $\frac{1}{\sqrt{2}} ||G|| \leq E \leq ||G||$.
\end{theorem}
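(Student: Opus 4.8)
The plan is to combine three ingredients: first, the algorithm outputs a genuine quantum state, so its energy is automatically at most $\|G\|$, giving the upper bound $E \le \|G\|$ for free; second, the SDP is a valid relaxation, so $\|G\| \le \text{Lasserre}_2(G)$, and unwinding the definition of $y_{ij}$ in Equation \ref{y_values} together with the normalisation $\sum_{ij} w_{ij} = 1$ shows $\text{Lasserre}_2(G) = \sum_{ij} w_{ij}(1 + y_{ij}) = 1 + \eta$, hence $\|G\| \le 1+\eta$; and third, the all-zero state $\bigotimes_V |0\rangle$ achieves $\langle G\rangle = 1$, so $\|G\| \ge 1$.

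For the lower bound $E \ge \frac{1}{\sqrt 2}\|G\|$ I would split according to the two branches of Algorithm \ref{alg_EPR}. In the case $\eta \ge \sqrt 2 - 1$, the analysis preceding the theorem (the lemma bounding $\langle g_{ij}\rangle$ together with Jensen applied to the convex function $y \mapsto \frac12 + y + \frac12 y^2$) gives $E \ge \frac12 + \eta + \frac12\eta^2 = \frac12(1+\eta)^2$. Since $1 + \eta \ge \sqrt 2$ in this regime, we have $\frac12(1+\eta)^2 = \frac{1+\eta}{\sqrt 2}\cdot\frac{1+\eta}{\sqrt 2} \ge \frac{1}{\sqrt 2}(1+\eta) \ge \frac{1}{\sqrt 2}\|G\|$, using $\frac{1+\eta}{\sqrt 2}\ge 1$ for the middle inequality and $\|G\| \le 1+\eta$ for the last. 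In the case $\eta < \sqrt 2 - 1$, the algorithm returns $\bigotimes_V|0\rangle$ with $E = 1$, while $\|G\| \le 1 + \eta < \sqrt 2$, so $\frac{1}{\sqrt 2}\|G\| < 1 = E$. Both cases give the claim.

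I do not expect a genuine obstacle here; the proof is essentially a crossover argument, and the only points requiring care are bookkeeping ones: confirming from Section \ref{Lasserre} that $\text{Lasserre}_2(G) = 1+\eta$ and that it upper-bounds $\|G\|$, and observing that the threshold $\sqrt 2 - 1$ is exactly the value at which the quadratic guarantee $\tfrac12(1+\eta)^2$ of the rotated state overtakes the constant guarantee $1$ of the product state relative to the target ratio $1/\sqrt 2$ — so the threshold is optimal for this analysis and any other choice would weaken one of the two cases. The convexity needed for the Jensen step has already been noted in the text, so nothing new is required.
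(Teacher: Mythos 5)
Your proof is correct and follows essentially the same route as the paper: the paper compresses the argument into the single line $\frac{E}{\|G\|}\ge\min_\eta\max\bigl(\frac{1}{1+\eta},\,\frac{\frac12+\eta+\frac12\eta^2}{1+\eta}\bigr)=\frac{1}{\sqrt2}$, and your case split at $\eta=\sqrt2-1$ is exactly the evaluation of that min--max, using the same ingredients ($\|G\|\le 1+\eta$ from the relaxation and the two energy guarantees). No gap; you have just written out the crossover computation the paper leaves implicit.
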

\begin{proof}
Recall that $\text{Lasserre}_2(G)$ obtains objective $1 + \sum_{ij} w_{ij} y_{ij} = 1+\eta$, which is an upper bound on $||G||$. Thus the approximation ratio is
\begin{align*}
\frac{E}{||G||} &\geq \text{min}_\eta \ \text{max} \left(\frac{1}{1+\eta} \ , \ \frac{\frac{1}{2}+\eta+\frac{1}{2}\eta^2}{1+\eta}\right) \\
&= \text{min}_\eta \ \text{max} \left(\frac{1}{1+\eta} \ , \ \frac{1}{2} (1 + \eta)\right) \\
&= \frac{1}{\sqrt{2}}
\end{align*}
The crossover between the two ratios occurs at $\eta = \sqrt{2} - 1$.
\end{proof}

\begin{figure}[H]
\centering
  \includegraphics[width=3in]{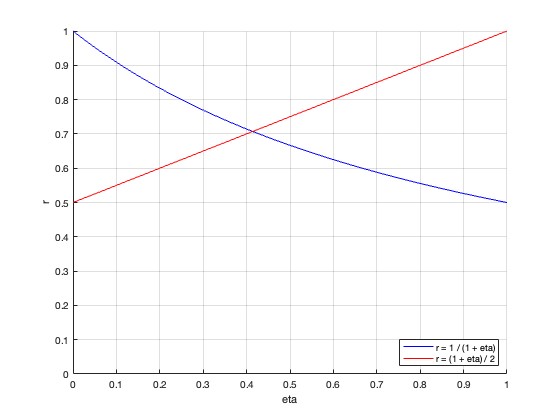}
\caption{It is instructive to see plots of the functions involved in the proof of Theorem \ref{EPR_alg_thm}. Given $\eta$ as defined in the main text, the product state achieves approximation ratio $1/(1+\eta)$, and the output of the commuting circuit achieves approximation ratio $(1+\eta)/2$. Algorithm \ref{alg_EPR} takes whichever state is better. The crossover occurs at $\eta = \sqrt{2} - 1$, where $1/(1+\eta) = (1+\eta)/2 = 1/\sqrt{2}$; thus the final worst-case approximation ratio of Algorithm \ref{alg_EPR} is $1/\sqrt{2}$.}
\label{plt_ratio_EPR}
\
\end{figure}

\subsection{Algorithm for QMC}\label{algorithm_QMC}

We now move onto our algorithm for the full Quantum Max-Cut problem on triangle-free graphs.

\begin{algorithm}\label{alg_QMC}
Input: triangle-free weighted graph $(V, \{w_{ij}\})$ with $\sum_{ij} w_{ij} = 1$.
\begin{enumerate}
	\item Solve $\text{Lasserre}_2(H)$ from Definition \ref{SDP} to get solution $M \in \mathbb{R}^{\mathcal{P}^{(2)}_n \times \mathcal{P}^{(2)}_n}$.
	\item Perform GP rounding \cite{Gharibian-Parekh} on $\text{Lasserre}_2(H)$ to get product state $\bigotimes_{k\in V} |v_k\rangle$.
	\item For each edge $(i,j)$ compute $x_{ij} = - \frac{1}{2} \left(1 + M(\mathbbm{1},X_iX_j) + M(\mathbbm{1},Y_iY_j) + M(\mathbbm{1},Z_iZ_j)\right)$.
	\item Let $F(\beta;x) = f(x) \left(\frac{1}{2} + \frac{1}{2}\left(1 - \beta^2 (1-x)^2\right) + \frac{2}{\pi} \beta x \left(\sqrt{1-\beta^2} + \left(1 - \sqrt{1-\beta^2}\right) x\right)\right)$, where $\beta \in [0,1]$ and $x \in [-1,1]$.
	\item Set $\beta^* = \text{argmax}_\beta \ \text{min}_x \ \frac{F(\beta;x)}{1+x} \approx 0.390$.
	\item Set $\theta_{ij} =
	\begin{cases}
	\frac{1}{2} \arcsin{\beta^* x_{ij}} & x_{ij} \geq 0 \\
	0 & \text{\emph{otherwise}}
	\end{cases}$
	\item For each $k\in V$, independently select $\vec{n}_k \in S^2 \subset \mathbb{R}^3$ uniformly at random from the unit vectors satisfying $\vec{n}_k \cdot \vec{v}_k = 0$, where $\vec{v}_k \in S^2$ is the Bloch vector of $|v_k\rangle$.
	\item Set $P_k = \vec{n}_k \cdot \vec{\sigma}_k$.
	\item Choose the signs $\varepsilon_{ij} \in \{+1,-1\}$ according to $\arg \langle v_i|P_j|v_j\rangle\langle v_j|P_i|v_i\rangle =
	\begin{cases}
	[0,\pi) & \varepsilon_{ij} = +1 \\
	[-\pi,0) & \varepsilon_{ij} = -1
	\end{cases}$
	\item Return the state $\prod_{ij} \exp(i \varepsilon_{ij} \theta_{ij} P_i \otimes P_j) \bigotimes_{k\in V} |v_k\rangle$ with expected energy $E \geq 0.582\cdot \text{\emph{Lasserre}}_2(H) \geq 0.582\cdot ||H||$.
\end{enumerate}
\end{algorithm}

The algorithm has three components. First, it formulates and solves $\text{Lasserre}_2(H)$ from Section \ref{Lasserre}. In the second stage, it performs GP rounding \cite{Gharibian-Parekh} on the SDP solution to derive the product state $\bigotimes_{k\in V} |v_k\rangle$ with the properties detailed in Section \ref{product}.

From $\text{Lasserre}_2(H)$ it computes the values $x_{ij}$ for each edge $(i,j)$ defined in Equation \ref{x_values}. It then sets
\begin{equation}
\theta_{ij} =
	\begin{cases}
	\frac{1}{2} \arcsin{\beta^* x_{ij}} & x_{ij} \geq 0 \\
	0 & \text{otherwise}
	\end{cases}
\end{equation}

The parameter $\beta$ was not present in the EPR algorithm. Although setting $\beta=1$ is natural, we found that optimizing $\beta \in [0,1]$ is useful. Firstly, it improves the approximation ratio. Secondly, a common feature of entangled approximation algorithms is that they choose between an entangled state and a product state at the end. Indeed, this was the case for the EPR algorithm in Section \ref{algorithm_EPR}. The introduction of $\beta$ allows a unified algorithm to achieve the approximation ratio alone, without needing to take the maximum of two separate algorithms as in the EPR case. It is perhaps possible to introduce an analogous parameter into the EPR algorithm; we kept $\beta=1$ in Section \ref{algorithm_EPR} for the sake of simplicity and clarity.

The final component is to perform the circuit of commuting gates $\prod_{ij} \exp(i \varepsilon_{ij} \theta_{ij} P_i \otimes P_j)$ on $\bigotimes_{k\in V} |v_k\rangle$ as described in Section \ref{rotation_QMC}. This involves using knowledge of the states $|v_k\rangle$ to randomly choose the single-qubit matrices $P_k$, and given a choice of $P_k$, choosing the signs of the rotation angles $\varepsilon_{ij} \in \{+1,-1\}$.

The expected energy of the rotated state $\prod_{ij} \exp(i \varepsilon_{ij} \theta_{ij} P_i \otimes P_j) \bigotimes_{k \in K} |v_k\rangle$, averaged over the random choices of $P_k$, can be computed from Lemma \ref{rotation_lemma_QMC} and Corollary \ref{starbound_corollary}.

\begin{lemma}
Defining
\begin{equation}
\theta_{ij} =
	\begin{cases}
	\frac{1}{2} \arcsin{\beta x_{ij}} & x_{ij} \geq 0 \\
	0 & \text{otherwise}
	\end{cases}
\end{equation}
with variable $\beta$, we have
\begin{equation}
\langle h_{ij} \rangle \geq E_{ij} \left(\frac{1}{2} + \frac{1}{2}\left(1 - \beta^2 (1-x_{ij})^2\right) + \frac{2}{\pi} \beta x_{ij} \left(\sqrt{1-\beta^2} + \left(1 - \sqrt{1-\beta^2}\right) x_{ij}\right)\right) \label{beta_lower_bound}
\end{equation}
\end{lemma}
\begin{proof}
First suppose $x_{ij} \geq 0$. Using Lemma \ref{rotation_lemma_QMC},
\begin{align*}
\langle h_{ij} \rangle &\geq E_{ij} \Bigg( \frac{1}{2} + \frac{1}{2} \prod_{j\neq k\sim i : x_{ik} \geq 0} \sqrt{1 - \beta^2 x_{ik}^2} \cdot \prod_{i\neq l\sim j : x_{lj} \geq 0} \sqrt{1 - \beta^2 x_{lj}^2} \\
&\quad + \frac{1}{\pi} \beta x_{ij} \Bigg(\prod_{j\neq k\sim i : x_{ik} \geq 0} \sqrt{1 - \beta^2 x_{ik}^2} + \prod_{i\neq l\sim j : x_{lj} \geq 0} \sqrt{1 - \beta^2 x_{lj}^2}\Bigg) \Bigg)
\end{align*}

We can use Corollary \ref{starbound_corollary} with $\beta = 1$ to write
\begin{align*}
\prod_{j\neq k\sim i : x_{ik} \geq 0} \sqrt{1 - \beta^2 x_{ik}^2} + \prod_{i\neq l\sim j : x_{lj} \geq 0} \sqrt{1 - \beta^2 x_{lj}^2} \ &\geq \ 2\sqrt{1 - \beta^2 (1-x_{ij})^2} \\ &\geq \ 2 \left(\sqrt{1 - \beta^2} + \left(1 - \sqrt{1-\beta^2}\right) x_{ij}\right)
\end{align*}
The second inequality uses the fact that the function $\sqrt{1 - \beta^2 (1-x)^2}$ is concave on $x \in [0,1]$. We have lower bounded it by the linear function interpolating between $(0,\sqrt{1-\beta^2})$ and $(1,1)$.

We can also use Corollary \ref{starbound_corollary} to write
\begin{equation*}
\prod_{j\neq k\sim i : x_{ik} \geq 0} \sqrt{1 - \beta^2 x_{ik}^2} \cdot \prod_{i\neq l\sim j : x_{lj} \geq 0} \sqrt{1 - \beta^2 x_{lj}^2} \ \geq \ 1 - \beta^2 (1-x_{ij})^2
\end{equation*}

These give
\begin{equation*}
\langle h_{ij} \rangle \geq E_{ij} \left(\frac{1}{2} + \frac{1}{2}\left(1 - \beta^2 (1-x_{ij})^2\right) + \frac{2}{\pi} \beta x_{ij} \left(\sqrt{1-\beta^2} + \left(1 - \sqrt{1-\beta^2}\right) x_{ij}\right)\right)
\end{equation*}

Now if $x_{ij} \leq 0$ then
\begin{align*}
\langle h_{ij} \rangle &\geq E_{ij} \left( \frac{1}{2} + \frac{1}{2} \prod_{j\neq k\sim i : x_{ik} \geq 0} \sqrt{1 - \beta^2 x_{ik}^2} \cdot \prod_{i\neq l\sim j : x_{lj} \geq 0} \sqrt{1 - \beta^2 x_{lj}^2} \right) \\
&\geq E_{ij} \left(\frac{1}{2} + \frac{1}{2}(1 - \beta^2)\right)
\end{align*}
using Corollary \ref{starbound_corollary}. This is greater than the expression in Equation \ref{beta_lower_bound} if $x_{ij} \leq 0$, so Equation \ref{beta_lower_bound} holds in both cases $x_{ij} \geq 0$ and $x_{ij} \leq 0$.
\end{proof}

Since $E_{ij} \geq f(x_{ij})$, we have
\begin{align*}
\langle h_{ij} \rangle &\geq f(x_{ij}) \left(\frac{1}{2} + \frac{1}{2}\left(1 - \beta^2 (1-x_{ij})^2\right) + \frac{2}{\pi} \beta x_{ij} \left(\sqrt{1-\beta^2} + \left(1 - \sqrt{1-\beta^2}\right) x_{ij}\right)\right)\\
&=: F(\beta;x_{ij})
\end{align*}

We would like $\langle h_{ij}\rangle$ to be as high as possible relative to the value achieved by $\text{Lasserre}_2(H)$ on edge $ij$, which is $1+x_{ij}$. This leads us to choose $\beta$ to be the value maximising $\text{min}_x \frac{F(\beta,x)}{1+x}$. Denote this $\beta^* := \text{argmax}_\beta \text{min}_x \frac{F(\beta,x)}{1+x} \approx 0.390$. This gives
\begin{equation*}
\text{min}_x \frac{F(\beta^*,x)}{1+x} \approx 0.5828
\end{equation*}

\begin{theorem}
The output energy of Algorithm \ref{alg_QMC} satisfies $0.582 \cdot ||H|| \leq E \leq ||H||$.
\end{theorem}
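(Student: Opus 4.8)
The plan is to assemble the pieces already established: the per-edge energy lower bound $\langle h_{ij}\rangle \geq F(\beta^*; x_{ij})$, the fact that $\mathrm{Lasserre}_2(H) = \sum_{ij} w_{ij}(1+x_{ij})$ is a valid upper bound on $\|H\|$, and the numerical claim that $\min_x \frac{F(\beta^*;x)}{1+x} \approx 0.5828 > 0.582$. First I would interpret the weights $\{w_{ij}\}$ as a probability distribution over edges (legitimate since $\sum_{ij} w_{ij} = 1$), so that the total expected energy of the output state is $\langle H\rangle = \sum_{ij} w_{ij} \langle h_{ij}\rangle \geq \sum_{ij} w_{ij} F(\beta^*; x_{ij})$. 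Then, edge by edge, I would use $F(\beta^*; x_{ij}) \geq \left(\min_x \frac{F(\beta^*;x)}{1+x}\right)(1+x_{ij}) \geq 0.582\,(1+x_{ij})$, and sum against $w_{ij}$ to conclude $\langle H\rangle \geq 0.582 \sum_{ij} w_{ij}(1+x_{ij}) = 0.582 \cdot \mathrm{Lasserre}_2(H) \geq 0.582 \cdot \|H\|$. The upper bound $E \leq \|H\|$ is immediate since $\langle H\rangle \leq \|H\|$ for any state.

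The one subtlety worth spelling out is that $1 + x_{ij} \geq 0$ for every edge (so that multiplying the ratio bound through by $1+x_{ij}$ preserves the inequality direction): this follows from $-1 \leq x_{ij} \leq 1$, which was recorded among the basic properties of the $x$ values, itself a consequence of $\langle h_{ij}\rangle \in [0,2]$ at the SDP level. I would also note that the per-edge bound $\langle h_{ij}\rangle \geq F(\beta^*; x_{ij})$ holds as an expectation over the random choices of the $\{P_k\}$ — so the $0.582$ guarantee is on the \emph{expected} energy, and a state achieving at least this value can be found by derandomization or by repetition, matching the algorithm's claimed output. Everything else is bookkeeping: linearity of expectation to pass from per-edge to total energy, and the fact that $E_{ij} \geq f(x_{ij})$ from the Gharibian--Parekh product-state lemma, which is what lets $F$ appear in place of the raw $E_{ij}\cdot(\dots)$ expression.

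The genuine content — and the main obstacle — is verifying the scalar optimization $\max_{\beta \in [0,1]} \min_{x \in [-1,1]} \frac{F(\beta;x)}{1+x} \geq 0.582$, i.e. that $\beta^* \approx 0.340$ achieves a minimax value above the threshold. This requires understanding $f(x)$, which is defined through a hypergeometric function ${}_2F_1(1/2,1/2;5/2;\cdot)$, well enough to control the ratio $\frac{F(\beta^*;x)}{1+x}$ uniformly over $x \in [-1,1]$. I would handle this by (i) establishing monotonicity/convexity properties of $f$ and of the polynomial factor in $F$, (ii) reducing the minimization over $x$ to checking a small number of candidate points (endpoints $x = \pm 1$ and interior critical points where $\frac{d}{dx}\frac{F(\beta^*;x)}{1+x} = 0$), and (iii) bounding the hypergeometric series rigorously — e.g. by truncating with an explicit tail bound — to certify the numerical value $0.5828$ with enough margin to clear $0.582$. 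This is the step where a careful, possibly computer-assisted, numerical argument is unavoidable; the rest of the proof is a two-line aggregation.
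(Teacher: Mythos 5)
Your proof is correct and follows essentially the same route as the paper: aggregate the per-edge bound $\langle h_{ij}\rangle \geq F(\beta^*;x_{ij})$ against the weights $w_{ij}$, apply the pointwise ratio bound $F(\beta^*;x)/(1+x) \geq 0.582$, and compare to $\text{Lasserre}_2(H) \geq ||H||$. You rightly identify that the only substantive step is certifying the scalar minimax value $\approx 0.5828$; the paper likewise treats this numerically, so your proposed rigorous (computer-assisted) verification of the hypergeometric ratio is, if anything, more careful than the published argument.
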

\begin{proof}
\begin{align*}
E &\geq \sum_{ij} w_{ij} F(\beta^*;x_{ij}) \\
&\geq 0.582\cdot \sum_{ij} w_{ij} (1 + x_{ij}) \\
&= 0.582\cdot \text{Lasserre}_2(H) \\
&\geq 0.582\cdot ||H||
\end{align*}
\end{proof}

A nice feature of this algorithm is that it maintains the $SU(2)$ rotation symmetry of the problem. For example, nowhere does the algorithm pick a preferred single-qubit basis, or a preferred direction on the Bloch sphere. This is in contrast to other QMC algorithms such as the one in \cite{AGM, Parekh-Thompson, scoop}, which often pick a preferred basis $\{|0\rangle,|1\rangle\}$.

\begin{figure}[H]
\centering
\begin{subfigure}{.5\textwidth}
  \centering
  \includegraphics[width=3in]{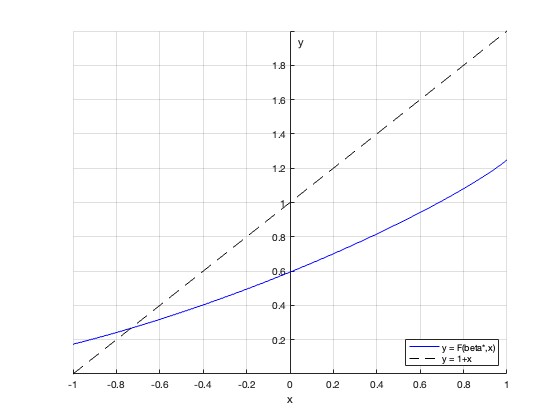}
  \caption{}
  \label{plt_QMC_abs}
\end{subfigure}%
\begin{subfigure}{.5\textwidth}
  \centering
  \includegraphics[width=3in]{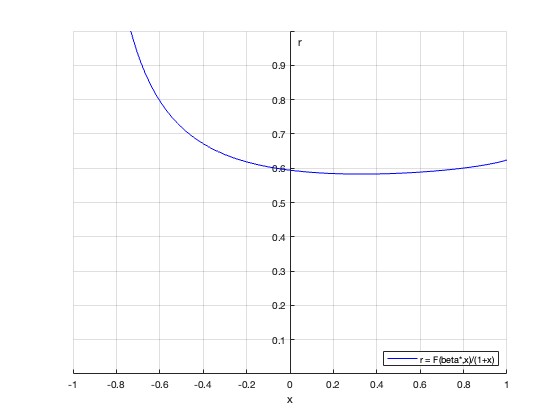}
  \caption{}
  \label{plt_QMC_ratio}
\end{subfigure}
\caption{Dropping the subscript $i,j$, the figure shows plots of the functions $F(\beta^*,x)$ versus $1+x$ and their ratio $F(\beta^*,x) / (1+x)$. Recall $F(\beta^*,x_{ij})$ is the energy achieved by Algorithm \ref{alg_QMC} on an edge $(i,j)$, whilst the SDP relaxation gives an energy upper bound of $1+x_{ij}$ on that edge. (Given edge $(i,j)$, $x_{ij}$ is defined in the main text.) The minimum of $F(\beta^*,x) / (1+x)$, shown in Figure \ref{plt_QMC_ratio}, is $0.582$ at $x = 0.329$. Therefore $0.582$ is the worst-case approximation ratio of Algorithm \ref{alg_QMC}.}
\label{plt_QMC}
\end{figure}

\section{Future directions}

We believe it would be fruitful to search for better approximation algorithms to the EPR Hamiltonian. Since optimizing the EPR Hamiltonian over product states is trivial, this problem serves to isolate the quantum task of optimizing the landscape of entanglement in the groundstate. An approximation algorithm would need to deal with the constraint of monogamy of entanglement.

It is very possible that there are better algorithms out there for Quantum Max-Cut. One way to imagine improving Algorithm \ref{alg_QMC} is to choose the matrices $P_k = \vec{n}_k \cdot \vec{\sigma}_k$ in a manner more sophisticated than independently and uniformly at random. Improvements to both Algorithms \ref{alg_EPR} and \ref{alg_QMC} may be possible by a better choice of rotation angles $\theta_{ij}$ as a function of $\text{Lasserre}_2$. Incidentally, we have no reason to believe that our analyses of the algorithms here are tight. Additionally, an average case analysis of algorithms for Quantum Max-Cut would be an interesting direction.

The Goemans-Williamson algorithm solves classical Max-Cut with an approximation ratio of 0.878 \cite{GW}. It works by formulating an SDP, and rounding it to a classical solution via a randomized hyperplane. Remarkably, this is optimal in two different senses.

Firstly, the SDP has an integrality gap exactly matching the approximation ratio of Goemans-Williamson \cite{GW_integrality_gap}. This is an instance $K$ of classical Max-Cut with $\text{MaxCut}(K) = 0.878\cdot\text{SDP}(K)$. This shows that the hyperplane rounding algorithm is optimal.

Can we establish intregrality gaps for $\text{Lasserre}_2$ on QMC and EPR? This would be an instance $H$ of Quantum Max-Cut where $||H|| = r\cdot\text{Lasserre}_2(H)$ for some $r<1$, and similarly for the EPR Hamiltonian. They would serve as upper bounds on the performance of algorithms based on $\text{Lasserre}_2$.

The second way in which Goemans-Williamson is optimal is a conditional hardness of approximation result. Assuming the Unique Games Conjecture, it is impossible to approximate classical Max-Cut to within a factor better than 0.878 \cite{GW_UGC_hardness}. Assuming the weaker $\text{P} \neq \text{NP}$, it is impossible to approximate within factor $0.941$ \cite{GW_NP_hardness}.

We can similarly ask for conditional hardness results for Quantum Max-Cut. In \cite{QMC_UGC_hardness}, the authors show Unique Games hardness of approximation for QMC to within a factor of 0.956, assuming a plausible conjecture in Gaussian geometry. Can we show NP-hardness of approximation? And can these approximation ratios be made tighter?

Even more interesting would be QMA-hardness of approximation. Unconditional QMA-hardness of approximation to within some constant factor would constitute a quantum PCP theorem. Can we make a QMA-hardness conjecture, analogous to the Unique Games Conjecture in the classical setting, under which we can show QMA-hardness of approximation for Quantum Max-Cut?

The questions above are motivated from existing knowledge on approximation algorithms for classical NP-hard problems. However, the answers may turn out to be much more subtle than in the classical setting. Far from being discouraging, we hope that this will provide insight into the ways in which quantum Hamiltonian problems differ from classical constraint satisfaction.

Last but not least, the high level idea of our algorithms is to use an SDP relaxation to inform the design of a quantum circuit. The SDP allows us to classically efficiently choose good parameters for a family of variational quantum circuits. It would be very interesting to see if this approach can be applied to other problems in quantum information.

\bigskip
{\noindent\bf\large Acknowledgements}

The author is very grateful to John Wright for in-depth discussions and detailed feedback on the write-up. The author also thanks Chaithanya Rayudu for a useful observation, Sevag Gharibian for helpful discussions, and Thomas Vidick for feedback on the write-up.

\bibliographystyle{alpha}
\bibliography{refs}

\pagebreak
\section{Appendix}

\subsection{Proof of Lemma \ref{rotation_lemma_EPR}}\label{rotation_proof_EPR}

\begin{lemma}
\emph{(Restatement of Lemma \ref{rotation_lemma_EPR})}

Consider the state $\prod_{kl} \exp\left(\frac{1}{2} i \theta_{kl} (X_k - Y_k) \otimes (X_l - Y_l)\right) \bigotimes_V |0\rangle$. The energy of this state on the Hamiltonian term $g_{ij}$ is given by
\begin{equation*}
\langle g_{ij} \rangle \geq \frac{1}{2} + \frac{1}{2} \prod_{j\neq k\sim i} \cos{2\theta_{ik}} \cdot \prod_{i\neq l\sim j} \cos{2\theta_{lj}} + \frac{1}{2} \sin{2\theta_{ij}} \left( \prod_{j\neq k\sim i} \cos{2\theta_{ik}} + \prod_{i\neq l\sim j} \cos{2\theta_{lj}} \right)
\end{equation*}
(The notation $j\neq k\sim i$ is shorthand for $k \in \{k \in V : k \neq j , w_{ik} \neq 0\}$.)
\end{lemma}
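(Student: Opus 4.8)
The plan is to change to a rotated single-qubit frame in which the circuit becomes a product of commuting $\tilde X\tilde X$-rotations, split off the single gate acting on edge $(i,j)$, and then evaluate a handful of one- and two-point functions in the remaining state by commuting Pauli operators through the gates.

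First I would set $\tilde X_k := \tfrac{1}{\sqrt 2}(X_k - Y_k)$ — the Hermitian unitary $P_k$ appearing in the gates — together with $\tilde Z_k := Z_k$ and $\tilde Y_k := \tfrac{1}{\sqrt 2}(X_k+Y_k) = i\tilde X_k\tilde Z_k$. These obey the usual single-qubit Pauli relations, the initial state $\bigotimes_V|0\rangle$ is the simultaneous $+1$ eigenstate of all the $\tilde Z_k$, each gate equals $\exp(i\theta_{kl}\tilde X_k\tilde X_l)$, and substituting into $g_{ij} = \tfrac12(\mathbbm{1} + X_iX_j - Y_iY_j + Z_iZ_j)$ gives $g_{ij} = \tfrac12(\mathbbm{1} + \tilde Z_i\tilde Z_j + \tilde X_i\tilde Y_j + \tilde Y_i\tilde X_j)$, where moreover $\tilde X_i\tilde Y_j + \tilde Y_i\tilde X_j = i\tilde X_i\tilde X_j(\tilde Z_i + \tilde Z_j)$. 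All the gates pairwise commute, since $\tilde X_a\tilde X_b$ and $\tilde X_c\tilde X_d$ always commute.

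Next I would peel off the $(i,j)$ gate: write $|\Psi\rangle = U_{ij}|\Phi\rangle$ with $U_{ij} = \exp(i\theta_{ij}\tilde X_i\tilde X_j)$ and $|\Phi\rangle$ the same product over all \emph{other} edges applied to $\bigotimes_V|0\rangle$. Since $\tilde Z_i\tilde Z_j$ and $\tilde X_i\tilde X_j$ commute with $U_{ij}$, conjugation by $U_{ij}$ only rotates $\tilde Z_i + \tilde Z_j$, and a short computation (using $\tilde X_i\tilde Z_i = -i\tilde Y_i$, etc.) gives $U_{ij}^\dagger g_{ij}U_{ij} = \tfrac12\big(\mathbbm{1} + \tilde Z_i\tilde Z_j + \cos2\theta_{ij}(\tilde X_i\tilde Y_j + \tilde Y_i\tilde X_j) + \sin2\theta_{ij}(\tilde Z_i + \tilde Z_j)\big)$. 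Therefore $\langle g_{ij}\rangle = \tfrac12\big(1 + \langle\Phi|\tilde Z_i\tilde Z_j|\Phi\rangle + \cos2\theta_{ij}\langle\Phi|\tilde X_i\tilde Y_j + \tilde Y_i\tilde X_j|\Phi\rangle + \sin2\theta_{ij}(\langle\Phi|\tilde Z_i|\Phi\rangle + \langle\Phi|\tilde Z_j|\Phi\rangle)\big)$, which reduces the problem to three types of expectation in $|\Phi\rangle$. Two observations finish it. First, $\tilde X_i$ commutes with every gate (those incident to $i$ are $\exp(i\theta_{ik}\tilde X_i\tilde X_k)$, sharing the factor $\tilde X_i$; the rest act elsewhere), so $\langle\Phi|\tilde X_i\tilde Y_j|\Phi\rangle = \langle 0^V|\tilde X_i\,O\,|0^V\rangle = 0$ for an operator $O$ not touching qubit $i$, and likewise $\langle\Phi|\tilde Y_i\tilde X_j|\Phi\rangle = 0$; the $\cos2\theta_{ij}$ term drops. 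Second, conjugating $\tilde Z_i$ through the gates incident to $i$ via $\exp(-i\theta\tilde X_i\tilde X_k)\,\tilde Z_i\,\exp(i\theta\tilde X_i\tilde X_k) = \cos2\theta\,\tilde Z_i - \sin2\theta\,\tilde Y_i\tilde X_k$ (and $\tilde Y_i \mapsto \cos2\theta\,\tilde Y_i + \sin2\theta\,\tilde Z_i\tilde X_k$), every term other than the leading $\big(\prod_{j\ne k\sim i}\cos2\theta_{ik}\big)\tilde Z_i$ carries either a $\tilde Y_i$ or a $\tilde X$ on some neighbour, hence has zero expectation in $\bigotimes_V|0\rangle$; so $\langle\Phi|\tilde Z_i|\Phi\rangle = \prod_{j\ne k\sim i}\cos2\theta_{ik}$ and symmetrically for $j$, and the same bookkeeping on $\tilde Z_i\tilde Z_j$ (now the acting gates are $U_{ik}$ with $k\ne j$ and $U_{jl}$ with $l\ne i$) yields $\prod_{j\ne k\sim i}\cos2\theta_{ik}\cdot\prod_{i\ne l\sim j}\cos2\theta_{lj}$. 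Substituting these back reproduces the claimed identity.

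I expect the two-point estimate $\langle\Phi|\tilde Z_i\tilde Z_j|\Phi\rangle$ to be the real work: one has to enumerate all the Paulis generated when $\tilde Z_i$ and $\tilde Z_j$ are pushed through their respective incident gates and verify that only the all-cosine monomial survives in the all-$|0\rangle$ state. The delicate case is when $i$ and $j$ share neighbours (triangles through $(i,j)$): a stray $\tilde X_k$ produced on the $i$-side can combine with one produced on the $j$-side and square to $\mathbbm{1}$, so such terms need not vanish automatically, and one must check that they cancel — this is the point that demands the most care, and also where one should confirm that the closed form holds with equality and not merely as a lower bound.
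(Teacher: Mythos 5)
Your Heisenberg-picture route is a genuinely different (and in many ways cleaner) argument than the paper's, which instead expands $U\bigotimes_V|0\rangle$ in the Schr\"odinger picture as a sum over subsets $S\subset K$, $T\subset L$ recording which gates contribute their $\sin$ part, and then keeps only the diagonal terms $\sum_{S,T}[\dots]\,|\langle\phi^+_{ij}|\cdots\rangle|^2$. Your conjugation identities, the vanishing of the $\cos 2\theta_{ij}$ term, and the one-point function $\langle\Phi|\tilde{Z}_i|\Phi\rangle=\prod_{j\neq k\sim i}\cos 2\theta_{ik}$ all check out exactly. However, the issue you flag at the end is not merely a step that "demands care" --- it is a genuine obstruction, and the cancellation you hope for does not occur once $i$ and $j$ have two or more common neighbours. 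Writing $V_i^\dagger\tilde{Z}_iV_i=\sum_{S\subset K}c_S\,A_S\prod_{k\in S}\tilde{X}_k$ with $A_S\in\{\tilde{Z}_i,\tilde{Y}_i\}$ according to the parity of $|S|$, the terms surviving in $\langle\Phi|\tilde{Z}_i\tilde{Z}_j|\Phi\rangle$ are exactly those with $S=T\subset K\cap L$ and $|S|$ even, and each such term contributes $c_Sd_S=\prod_{k\in K\setminus S}\cos 2\theta_{ik}\prod_{l\in L\setminus S}\cos 2\theta_{lj}\prod_{m\in S}\sin 2\theta_{im}\sin 2\theta_{mj}\geq 0$, which is strictly positive in general. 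A concrete counterexample to the claimed equality: take $V=\{i,j,m_1,m_2\}$ with the four edges $(i,m_1),(i,m_2),(j,m_1),(j,m_2)$ all at angle $\pi/4$ and $\theta_{ij}=0$. The output state is $\frac12\left(|0000\rangle+i|0011\rangle+i|1100\rangle+|1111\rangle\right)$ (qubit order $i,j,m_1,m_2$), whose reduced state on $(i,j)$ is $\frac12\left(|00\rangle\langle00|+|11\rangle\langle11|\right)$, so $\langle g_{ij}\rangle=1$; the formula in the lemma evaluates to $\frac12$.

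The correct conclusion of your computation is therefore the inequality $\langle g_{ij}\rangle\geq$ the stated right-hand side whenever all $\theta_{kl}\in[0,\pi/4]$ (so that every extra term is non-negative), with equality precisely when the edge $(i,j)$ lies in at most one triangle. This weaker statement is all that the downstream analysis of Algorithm 10 uses, so nothing breaks, but you cannot prove the lemma as an identity because it is not one. You should also be aware that the paper's own proof has the same unaddressed step: discarding the off-diagonal $(S,T)\neq(S',T')$ contributions amounts to assuming the states $\prod_{k\in S}P_k\prod_{l\in T}P_l\bigotimes|0\rangle$ are pairwise orthogonal on the qubits outside $\{i,j\}$, which fails exactly on the even subsets of $K\cap L$ identified above. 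If you want a clean equality, either restrict to $|K\cap L|\leq 1$ or carry the extra sum over even subsets of $K\cap L$ explicitly; otherwise state and prove the one-sided bound.
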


\begin{figure}[H]
\centering
  \includegraphics[width=3in]{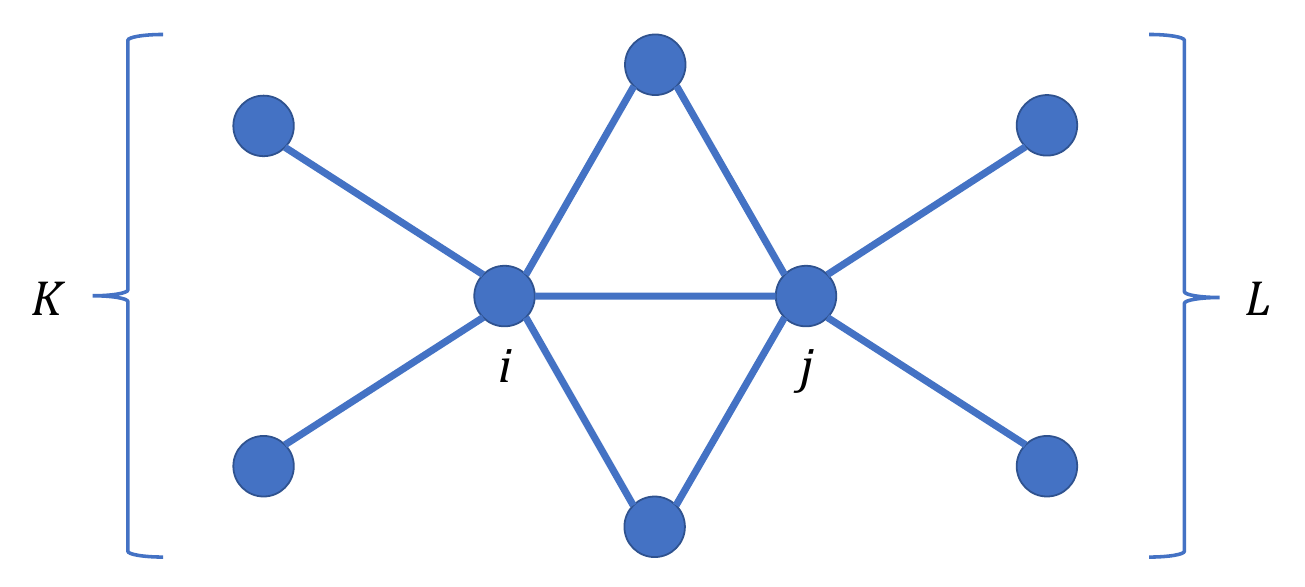}
  \caption{Illustration of $K = \{k\in V : k\sim i, k\neq j\}$ and $L = \{l\in V : l\sim j, l\neq i\}$.}
  \label{fig_rotation}
\
\end{figure}

\begin{proof}
Use the shorthand $P = \frac{X - Y}{\sqrt{2}}$ and $Q = \frac{X + Y}{\sqrt{2}}$. These operators anticommute $PQ = -QP$. Let $K = \{k\in V : k\sim i, k\neq j\}$ and $L = \{l\in V : l\sim j, l\neq i\}$. (See Figure \ref{fig_rotation}.)

It can be checked that
\begin{equation} \label{g_ij}
g_{ij} = \frac{1}{2} \left( \mathbbm{1} + Z_i Z_j + Q_i P_j + P_i Q_j \right)
\end{equation}

We would like to compute the expectation $\langle g_{ij} \rangle$. We will do this by computing the expectation of each term in Equation \ref{g_ij} separately.

\begin{align*}
\langle Q_i P_j \rangle &= \left(\bigotimes_V \langle 0|\right) \left( \prod_{kl} e^{-i \theta_{kl} P_k P_l} \right) Q_i P_j \left( \prod_{kl} e^{i \theta_{kl} P_k P_l} \right) \left(\bigotimes_V |0\rangle\right) \\
&= \left(\bigotimes_V \langle 0|\right) \left( \prod_{k \sim i} e^{-2i \theta_{ik} P_i P_k} \right) Q_i P_j \left(\bigotimes_V |0\rangle\right) \\
&= \left(\bigotimes_V \langle 0|\right) \prod_{k \sim i} \left( \cos{2\theta_{ik}} \mathbbm{1} - i \sin{2\theta_{ik}} P_i P_k \right) Q_i P_j \left(\bigotimes_V |0\rangle\right) \\
&= -i \sin{2\theta_{ij}} \left(\prod_{k\in K} \cos{2\theta_{ik}}\right) \langle 0_i 0_j| P_i P_j Q_i P_j |0_i 0_j\rangle \\
&= \sin{2\theta_{ij}} \prod_{k\in K} \cos{2\theta_{ik}}
\end{align*}
Here we used that $\langle 0|P|0 \rangle = \langle 0|Q|0 \rangle = 0$.

Similarly,
\begin{equation*}
\langle P_i Q_j \rangle = \sin{2\theta_{ij}} \prod_{l\in L} \cos{2\theta_{lj}}
\end{equation*}

Finally,
\begin{align*}
\langle Z_i Z_j \rangle &= \left(\bigotimes_V \langle 0|\right) \left( \prod_{kl} e^{-i \theta_{kl} P_k P_l} \right) Z_i Z_j \left( \prod_{kl} e^{i \theta_{kl} P_k P_l} \right) \left(\bigotimes_V |0\rangle\right) \\
&= \left(\bigotimes_V \langle 0|\right) \left( \prod_{k\in K} e^{-2i \theta_{ik} P_i P_k} \right) \left( \prod_{l\in L} e^{-2i \theta_{lj} P_l P_j} \right) Z_i Z_j \left(\bigotimes_V |0\rangle\right) \\
&= \left(\bigotimes_V \langle 0|\right) \prod_{k\in K} \left( \cos{2\theta_{ik}} \mathbbm{1} - i \sin{2\theta_{ik}} P_i P_k \right) \prod_{l\in L} \left( \cos{2\theta_{lj}} \mathbbm{1} - i \sin{2\theta_{lj}} P_l P_j \right) Z_i Z_j \left(\bigotimes_V |0\rangle\right) \\
&= \left(\bigotimes_V \langle 0|\right) \prod_{k\in K} \left( \cos{2\theta_{ik}} \mathbbm{1} - i \sin{2\theta_{ik}} P_i P_k \right) \prod_{l\in L} \left( \cos{2\theta_{lj}} \mathbbm{1} - i \sin{2\theta_{lj}} P_l P_j \right) \left(\bigotimes_V |0\rangle\right)
\end{align*}

Imagine expanding all brackets in the above expression. Recalling $\langle 0|P|0 \rangle = \langle 0|Q|0 \rangle = 0$, most of the terms will go to zero. Now let $T = \{t\in V : t\sim i, t\sim j\}$; that is, the set of shared neighbours of $i$ and $j$ forming triangles. A pair of triangles $\{i,j,s\}$, $\{i,j,t\}$ will give a term
\begin{equation*}
\sin{2\theta_{is}} \sin{2\theta_{it}} \sin{2\theta_{sj}} \sin{2\theta_{tj}} \prod_{k\in K \setminus \{s,t\}} \cos{2\theta_{ik}} \prod_{l\in L \setminus \{s,t\}} \cos{2\theta_{lj}}
\end{equation*}
More generally, we will pick up a non-zero term precisely when we use the $\sin$ terms for $k$ and $l$ ranging over an \emph{even} number of triangle vertices $s \in S$. This gives
\begin{equation*}
\langle Z_i Z_j \rangle = \sum_{\substack{S \subset T \\ |S| \ \text{even}}} \prod_{s\in S} \sin{2\theta_{is}} \sin{2\theta_{sj}} \prod_{k\in K \setminus S} \cos{2\theta_{ik}} \prod_{l\in L \setminus S} \cos{2\theta_{lj}}
\end{equation*}

Since all these terms are positive, we can throw away all of them except the $S = \emptyset$ term, giving
\begin{equation*}
\langle Z_i Z_j \rangle \geq \prod_{k\in K} \cos{2\theta_{ik}} \prod_{l\in L} \cos{2\theta_{lj}}
\end{equation*}

This completes the proof of Lemma \ref{rotation_lemma_EPR}.
\end{proof}

Lemma \ref{rotation_lemma_EPR} is closely related to Lemma 3 of \cite{AGM}.

\subsection{Proof of Lemma \ref{rotation_lemma_QMC}}\label{rotation_proof_QMC}

\begin{lemma}
\emph{(Restatement of Lemma \ref{rotation_lemma_QMC})}

Given an initial product state $\bigotimes_{k\in V} |v_k\rangle$, choose single-qubit matrices $\{P_k\}_{k\in V}$ and signs $\{\varepsilon_{kl}\}$ as described in Section \ref{rotation_QMC}. Consider the state $\prod_{kl} \exp(i \varepsilon_{kl} \theta_{kl} P_k \otimes P_l) \bigotimes_{m \in V} |v_m\rangle$, where the operators range over all edges $k\neq l$ of a triangle-free graph. The expected energy of this state on the Hamiltonian term $h_{ij}$, averaged over the random choices of $\{P_k\}_{k\in V}$, is given by
\begin{equation*}
\langle h_{ij} \rangle \geq E_{ij} \left( \frac{1}{2} + \frac{1}{2} \prod_{j\neq k\sim i} \cos{2\theta_{ik}} \cdot \prod_{i\neq l\sim j} \cos{2\theta_{lj}} + \frac{1}{\pi} \sin{2\theta_{ij}} \left( \prod_{j\neq k\sim i} \cos{2\theta_{ik}} + \prod_{i\neq l\sim j} \cos{2\theta_{lj}} \right) \right)
\end{equation*}
(The notation $j\neq k\sim i$ is shorthand for $k \in \{k \in V : k \neq j , w_{ik} \neq 0\}$.)
\end{lemma}

\begin{proof}
This proof will use an abuse of notation where $P_k$ will often denote the single qubit matrix $P_k$ acting on qubit $k$ and tensored with identity on all other qubits. The subscript $k\in V$ has two roles: it tells us which matrix is $P_k$, and which qubit it is acting on. Throughout this proof we will also often omit tensor product symbols $\otimes$.

Recall from Section \ref{rotation_QMC} that
\begin{equation*}
2 \langle v_i v_j |\psi^-_{ij}\rangle\langle \psi^-_{ij}|P_i P_j|v_i v_j\rangle = e^{-i \gamma_{ij}} E_{ij}
\end{equation*}
where
\begin{equation*}
\gamma_{ij} = \pi - \arg{\langle v_j|P_i|v_i\rangle\langle v_i|P_j|v_j\rangle}
\end{equation*}

From Lemma \ref{equator_lem} $\gamma_{ij} \in [-\pi,\pi)$ is uniformly random. Now if we choose the sign $\varepsilon_{ij} \in \{+1,-1\}$ according to Equation \ref{sign_choice}, that is
\begin{equation*}
\gamma_{ij} =
	\begin{cases}
	[0,\pi) & \rightarrow \ \varepsilon_{ij} = +1 \\
	[-\pi,0) & \rightarrow \ \varepsilon_{ij} = -1
	\end{cases}
\end{equation*}
then we will have
\begin{equation} \label{phased_overlap}
2 \langle v_i v_j |\psi^-_{ij}\rangle\langle \psi^-_{ij}| \varepsilon_{ij} P_i P_j|v_i v_j\rangle = e^{-i \gamma'_{ij}} E_{ij}
\end{equation}
with $\gamma'_{ij} \in [0,\pi)$ uniformly random.

\bigskip
Now let $K = \{k\in V : k\sim i, k\neq j\}$ and $L = \{l\in V : l\sim j, l\neq i\}$. (See Figure \ref{fig_rotation}.) We can write
\begin{equation} \label{h_ij}
\langle h_{ij} \rangle = \left(\bigotimes_{k\in V} \langle v_k|\right) U^\dag h_{ij} U \left(\bigotimes_{k\in V} |v_k\rangle\right)
\end{equation}
where
\begin{equation*}
U =  e^{i \varepsilon_{ij} \theta_{ij} P_i P_j} \left( \prod_{k \in K} e^{i \varepsilon_{ik} \theta_{ik} P_i P_k} \right) \left( \prod_{l \in L} e^{i \varepsilon_{lj} \theta_{lj} P_l P_j} \right)
\end{equation*}
since the other gates in the product $\prod_{kl} \exp(i \varepsilon_{ij} \theta_{kl} P_k P_l)$ commute with $h_{ij}$.

We can rewrite $U$ as
\begin{align}
U &= e^{i \varepsilon_{ij} \theta_{ij} P_i P_j} \prod_{k\in K, l\in L} (\cos{\theta_{ik}} \mathbbm{1} + i \varepsilon_{ij} \sin{\theta_{ik}} P_i P_k) (\cos{\theta_{lj}} \mathbbm{1} + i \varepsilon_{ij} \sin{\theta_{lj}} P_l P_j) \nonumber \\
&= \sum_{S\subset K, T\subset L} \left( \prod_{k\in K\setminus S} \cos{\theta_{ik}} \right) \left( \prod_{l\in L\setminus T} \cos{\theta_{lj}} \right) \left( \prod_{k\in S} i \varepsilon_{ij} \sin{\theta_{ik}} \right) \left( \prod_{l\in T} i \varepsilon_{ij} \sin{\theta_{lj}} \right) \nonumber\\
&\qquad\qquad\qquad\qquad\qquad\qquad\qquad\qquad\qquad \left( \prod_{k\in S} P_k \right) \left( \prod_{l\in T} P_l \right) e^{i \varepsilon_{ij} \theta_{ij} P_i P_j} P_i^{|S|} P_j^{|T|} \label{U_expansion_QMC}
\end{align}

Let's abbreviate the expression
\begin{equation*}
\alpha(S,T) = \left( \prod_{k\in K\setminus S} \cos{\theta_{ik}} \right) \left( \prod_{l\in L\setminus T} \cos{\theta_{lj}} \right) \left( \prod_{k\in S} i \varepsilon_{ij} \sin{\theta_{ik}} \right) \left( \prod_{l\in T} i \varepsilon_{ij} \sin{\theta_{lj}} \right)
\end{equation*}

The aim is to calculate $\langle h_{ij} \rangle$. We do this by expanding $U$ and $U^\dag$ in Equation \ref{h_ij} using Equation \ref{U_expansion_QMC}. This will give us a sum over $\{S \subset K, T \subset L, S' \subset K, T' \subset L\}$, where $S'$ and $T'$ come from the expansion of $U^\dag$. Recall $h_{ij} = 2 |\psi^-_{ij}\rangle\langle\psi^-_{ij}|$.

\begin{align*}
\langle h_{ij} \rangle &= \left(\bigotimes_{k\in V} \langle v_k|\right) U^\dag h_{ij} U \left(\bigotimes_{k\in V} |v_k\rangle\right) \\
&= \sum_{S,S'\subset K, T,T'\subset L} \alpha(S,T) \overline{\alpha(S',T')} \left(\bigotimes_{k\in V\setminus \{i,j\}} \langle v_k|\right) \prod_{k'\in S'} P_{k'} \prod_{l'\in T'} P_{l'} \prod_{k\in S} P_k \prod_{l\in T} P_l \left(\bigotimes_{k\in V\setminus \{i,j\}} |v_k\rangle\right) \\
&\qquad\qquad\qquad\qquad\qquad \cdot 2 \langle v_i v_j| P_i^{|S'|} P_j^{|T'|} e^{-i \varepsilon_{ij} \theta_{ij} P_i P_j} |\psi^-_{ij}\rangle\langle\psi^-_{ij}| e^{i \varepsilon_{ij} \theta_{ij} P_i P_j} P_i^{|S|} P_j^{|T|} |v_i v_j\rangle
\end{align*}

Consider the expression
\begin{equation} \label{expression_QMC}
\left(\bigotimes_{k\in V\setminus \{i,j\}} \langle v_k|\right) \prod_{k'\in S'} P_{k'} \prod_{l'\in T'} P_{l'} \prod_{k\in S} P_k \prod_{l\in T} P_l \left(\bigotimes_{k\in V\setminus \{i,j\}} |v_k\rangle\right)
\end{equation}
Since by assumption $i$ and $j$ share no neighbours, this expression vanishes unless $S=S'$ and $T=T'$. Thus in the expansion of \ref{h_ij}, we can ignore the `off-diagonal' terms $(S,T) \neq (S',T')$. This is the only place in the paper where we use the triangle-free assumption.

\begin{align*}
\langle h_{ij} \rangle &= \left(\bigotimes_{k\in V} \langle v_k|\right) U^\dag h_{ij} U \left(\bigotimes_{k\in V} |v_k\rangle\right) \\
&= \sum_{S\subset K, T\subset L} |\alpha(S,T)|^2 \cdot 2 \left|\langle\psi^-_{ij}| e^{i \varepsilon_{ij} \theta_{ij} P_i P_j} P_i^{|S|} P_j^{|T|} |v_i v_j\rangle \right|^2 \\
&= \sum_{S\subset K, T\subset L} |\alpha(S,T)|^2 \cdot 2 \left|\langle\psi^-_{ij}| \left( \cos{\theta_{ij}} P_i^{|S|} P_j^{|T|} + i \varepsilon_{ij} \sin{\theta_{ij}} P_i^{|S|+1} P_j^{|T|+1} \right) |v_i v_j\rangle \right|^2 \\
&= \sum_{S\subset K, T\subset L} |\alpha(S,T)|^2 \cdot
	\begin{cases}
	2 \left|\langle\psi^-_{ij}| \left(\cos{\theta_{ij}} |v_i v_j\rangle + i \varepsilon_{ij} \sin{\theta_{ij}} P_i P_j |v_i v_j\rangle \right)\right|^2 & |S|, |T| \ \text{even} \\
	2 \left|\langle\psi^-_{ij}| \left(\cos{\theta_{ij}} P_i P_j |v_i v_j\rangle + i \varepsilon_{ij} \sin{\theta_{ij}} |v_i v_j\rangle \right)\right|^2 & |S|, |T| \ \text{odd} \\
	\geq 0 & \text{otherwise}
	\end{cases} \\
&= \sum_{S\subset K, T\subset L} |\alpha(S,T)|^2 \cdot
	\begin{cases}
	2 \cos^2{\theta_{ij}} \langle v_i v_j|\psi^-_{ij}\rangle\langle \psi^-_{ij}|v_i v_j\rangle \\
	\quad + 2 \sin^2{\theta_{ij}} \langle v_i v_j|P_i P_j|\psi^-_{ij}\rangle\langle \psi^-_{ij}|P_i P_j|v_i v_j\rangle \\
    \quad + 2i \sin{\theta_{ij}} \cos{\theta_{ij}} \langle v_i v_j|\psi^-_{ij}\rangle\langle \psi^-_{ij}|\varepsilon_{ij} P_i P_j|v_i v_j\rangle \\
    \quad - 2i \sin{\theta_{ij}} \cos{\theta_{ij}} \langle v_i v_j|\varepsilon_{ij} P_i P_j|\psi^-_{ij}\rangle\langle \psi^-_{ij}|v_i v_j\rangle & |S|, |T| \ \text{even} \\
	2 \cos^2{\theta_{ij}} \langle v_i v_j|P_i P_j|\psi^-_{ij}\rangle\langle \psi^-_{ij}|P_i P_j|v_i v_j\rangle \\
	\quad + 2 \sin^2{\theta_{ij}} \langle v_i v_j|\psi^-_{ij}\rangle\langle \psi^-_{ij}|v_i v_j\rangle \\
    \quad - 2i \sin{\theta_{ij}} \cos{\theta_{ij}} \langle v_i v_j|\psi^-_{ij}\rangle\langle \psi^-_{ij}|\varepsilon_{ij} P_i P_j|v_i v_j\rangle \\
    \quad + 2i \sin{\theta_{ij}} \cos{\theta_{ij}} \langle v_i v_j|\varepsilon_{ij} P_i P_j|\psi^-_{ij}\rangle\langle \psi^-_{ij}|v_i v_j\rangle & |S|, |T| \ \text{odd} \\
	\geq 0 & \text{otherwise}
	\end{cases} \\
&= \sum_{S\subset K, T\subset L} |\alpha(S,T)|^2 \cdot
	\begin{cases}
	E_{ij} + E_{ij} \sin{\theta_{ij}} \cos{\theta_{ij}} (i e^{-i \gamma'_{ij}} - i e^{i \gamma'_{ij}}) & |S|, |T| \ \text{even} \\
	E_{ij} + E_{ij} \sin{\theta_{ij}} \cos{\theta_{ij}} (- i e^{-i \gamma'_{ij}} + i e^{i \gamma'_{ij}}) & |S|, |T| \ \text{odd} \\
	\geq 0 & \text{otherwise}
	\end{cases} \\
&= \sum_{S\subset K, T\subset L} |\alpha(S,T)|^2 \cdot
	\begin{cases}
	E_{ij} (1 + \sin{\gamma'_{ij}} \sin{2\theta_{ij}}) & |S|, |T| \ \text{even} \\
	E_{ij} (1 - \sin{\gamma'_{ij}} \sin{2\theta_{ij}}) & |S|, |T| \ \text{odd} \\
	\geq 0 & \text{otherwise}
	\end{cases}
\end{align*}

To deal with the expectation, we average over $\gamma'_{ij} \in [0,\pi)$.
\begin{align*}
\mathbb{E} \langle h_{ij} \rangle &= \sum_{S\subset K, T\subset L} |\alpha(S,T)|^2 \cdot
	\begin{cases}
	E_{ij} (1 + \mathbb{E}_{\gamma'_{ij}} \sin{\gamma'_{ij}} \sin{2\theta_{ij}}) & |S|, |T| \ \text{even} \\
	E_{ij} (1 - \mathbb{E}_{\gamma'_{ij}} \sin{\gamma'_{ij}} \sin{2\theta_{ij}}) & |S|, |T| \ \text{odd} \\
	\geq 0 & \text{otherwise}
	\end{cases} \\
&= \sum_{S\subset K, T\subset L} |\alpha(S,T)|^2 \cdot
	\begin{cases}
	E_{ij} (1 + \frac{2}{\pi} \sin{2\theta_{ij}}) & |S|, |T| \ \text{even} \\
	E_{ij} (1 - \frac{2}{\pi} \sin{2\theta_{ij}}) & |S|, |T| \ \text{odd} \\
	\geq 0 & \text{otherwise}
	\end{cases} \\
&\geq E_{ij} \left(\left( \sum_{|S|+|T| \ \text{even}} |\alpha(S,T)|^2 \right) + \frac{2}{\pi} \sin{2\theta_{ij}} \left( \sum_{|S|+|T| \ \text{even}} (-1)^{|S|} \cdot |\alpha(S,T)|^2 \right)\right)
\end{align*}

Now consider expanding the brackets in the following expression.
\begin{equation*}
\prod_{k\in K} (\cos^2{\theta_{ik}} + \sin^2{\theta_{ik}}) \prod_{l\in L} (\cos^2{\theta_{lj}} + \sin^2{\theta_{lj}}) + \prod_{k\in K} (\cos^2{\theta_{ik}} - \sin^2{\theta_{ik}}) \prod_{l\in L} (\cos^2{\theta_{lj}} - \sin^2{\theta_{lj}})
\end{equation*}
We get precisely
\begin{align*}
\prod_{k\in K} (\cos^2{\theta_{ik}} + \sin^2{\theta_{ik}}) \prod_{l\in L} (\cos^2{\theta_{lj}} + \sin^2{\theta_{lj}}) &+ \prod_{k\in K} (\cos^2{\theta_{ik}} - \sin^2{\theta_{ik}}) \prod_{l\in L} (\cos^2{\theta_{lj}} - \sin^2{\theta_{lj}}) \\
&= 2 \sum_{|S|+|T| \ \text{even}} |\alpha(S,T)|^2
\end{align*}
and similarly
\begin{align*}
\prod_{k\in K} (\cos^2{\theta_{ik}} + \sin^2{\theta_{ik}}) \prod_{l\in L} (\cos^2{\theta_{lj}} - \sin^2{\theta_{lj}}) &+ \prod_{k\in K} (\cos^2{\theta_{ik}} - \sin^2{\theta_{ik}}) \prod_{l\in L} (\cos^2{\theta_{lj}} + \sin^2{\theta_{lj}}) \\
&= 2 \sum_{|S|+|T| \ \text{even}} (-1)^{|S|} \cdot |\alpha(S,T)|^2
\end{align*}

Thus
\begin{align*}
\mathbb{E} \langle h_{ij} \rangle &\geq E_{ij} \Bigg( \frac{1}{2} \Bigg( \prod_{k\in K} (\cos^2{\theta_{ik}} + \sin^2{\theta_{ik}}) \prod_{l\in L} (\cos^2{\theta_{lj}} + \sin^2{\theta_{lj}}) \nonumber\\
&\qquad\qquad + \prod_{k\in K} (\cos^2{\theta_{ik}} - \sin^2{\theta_{ik}}) \prod_{l\in L} (\cos^2{\theta_{lj}} - \sin^2{\theta_{lj}}) \Bigg) \nonumber\\
&\qquad\quad + \frac{1}{\pi} \sin{2\theta_{ij}} \Bigg( \prod_{k\in K} (\cos^2{\theta_{ik}} + \sin^2{\theta_{ik}}) \prod_{l\in L} (\cos^2{\theta_{lj}} - \sin^2{\theta_{lj}}) \nonumber\\
&\qquad\qquad\qquad\qquad + \prod_{k\in K} (\cos^2{\theta_{ik}} - \sin^2{\theta_{ik}}) \prod_{l\in L} (\cos^2{\theta_{lj}} + \sin^2{\theta_{lj}}) \Bigg) \Bigg) \\
&= E_{ij} \left(\frac{1}{2} \left( 1 + \prod_{k\in K} \cos{2\theta_{ik}} \prod_{l\in L} \cos{2\theta_{lj}} \right) + \frac{1}{\pi} \sin{2\theta_{ij}} \left( \prod_{k\in K} \cos{2\theta_{ik}} + \prod_{l\in L} \cos{2\theta_{lj}} \right)\right)
\end{align*}
\end{proof}

\subsection{Proof of Corollary \ref{starbound_corollary}}\label{starbound_proof}

\begin{lemma} \label{starbound_corollary_lemma}
Let $\{x_k : k \in S\}$ be positive real numbers indexed by $S$, and $\eta \in [0,1]$. If $\sum_{k\in S} x_k \leq \eta$, then for any $\beta \in [0,1]$ we have
\begin{equation*}
\prod_{k\in S} \sqrt{1 - \beta^2 x_k^2} \geq \sqrt{1 - \beta^2 \eta^2}
\end{equation*}
\end{lemma}
\begin{proof}
The claim is that the minimum of the expression is achieved by setting $x_{k^*} = \eta$ for some $k^* \in S$ and $x_{ik}=0 \ \forall \ k \in S, k \neq k^*$. That is, put all the mass on one edge. To see this, suppose $x_{k'}, x_{k''}>0$. Fix $\{x_k : k \in S \setminus \{k',k''\}\}$. The expression depends on $\{x_{k'},x_{k''}\}$ solely through $\sqrt{1-\beta^2x_{k'}^2}\cdot\sqrt{1-\beta^2x_{k''}^2}$, subject to $x_{k'} + x_{k''} \leq \eta - \sum_{k \in S \setminus \{k',k''\}} x_{k}$. The expression can be decreased by setting $x_{k'} \leftarrow x_{k'} + x_{k''}$, $x_{k''} \leftarrow 0$.
\end{proof}

We now prove Corollary \ref{starbound_corollary}. For Part (1), set $S = \{k : k \sim i, x_{ik} \geq 0\}$ and $\eta = 1$. For Part (2), set $S = \{k : k \sim i, k \neq j, x_{ik} \geq 0\}$ and $\eta = 1 - x_{ij}$. In both cases, the conclusions follow from Lemma \ref{starbound_lemma} and Lemma \ref{starbound_corollary_lemma}. The arguments for $y_{ij}$ are identical.

\end{document}